
\documentclass[12pt]{article}     	 


\usepackage[margin=1in]{geometry} 
\usepackage{comment}                
\usepackage[tbtags]{amsmath}                
\usepackage{xcolor}                 
\usepackage{enumitem}               
\usepackage{amssymb}                
\usepackage{amsthm}                 
\usepackage{hyperref}               
\usepackage{graphicx}               
\usepackage{tikz, pgfplots}         
\usepackage{float}                  
\usepackage{natbib}                 
\usepackage{rotating}
\usepackage{setspace}
\usepackage{booktabs}
\usepackage{cleveref}
\usepackage{tcolorbox}
\usepackage{pdfpages}
\usepackage{bbm}
\usepackage[title]{appendix}

\theoremstyle{theorem}

\newtheorem{proposition}{Proposition}
\newtheorem{corollary}{Corollary}
\newtheorem{lemma}{Lemma}

\DeclareMathOperator*{\argmax}{arg\,max}
\DeclareMathOperator*{\argmin}{arg\,min}
\numberwithin{equation}{section}

\setlength \parindent{0pt}

\title{%
 Nonlinear Impulse Response Functions and Local Projections\thanks{We thank Burda, M., Lu, Y., Melino, A., Renault, E., the participants of the 2023 NBER/NSF Time Series Conference (Montreal), and of the 38th Canadian Econometrics Study Group Annual Meeting (Hamilton) for their helpful comments. A previous version of this paper was circulated on ArXiv: https://arxiv.org/abs/2305.18145.}}
\author{Gouri\'eroux, C.,\footnote{University of Toronto, Toulouse School of Economics and CREST, email: \textit{Christian.Gourieroux@ensae.fr}} and Q., Lee\footnote{University of Toronto, email: \textit{qt.lee@mail.utoronto.ca}}}
\date{Revised, \today}
\setlength\parindent{0pt}
\begin{document}
		\setstretch{1}
	\maketitle
	\begin{abstract}
		\noindent The goal of this paper is to extend the nonparametric estimation of Impulse Response Functions (IRF) by means of local projections in the nonlinear dynamic framework. We discuss the existence of a nonlinear autoregressive representation for Markov processes and explain how their IRFs are directly linked to the Nonlinear Local Projection (NLP), as in the case for the linear setting. We present a fully nonparametric LP estimator in the one dimensional nonlinear framework, compare its asymptotic properties to that of IRFs implied by the nonlinear autoregressive model and show that the two approaches are asymptotically equivalent. This extends the well-known result in the linear autoregressive model by Plagborg-Moller and Wolf (2017). We also consider extensions to the multivariate framework through the lens of semiparametric models, and demonstrate that the indirect approach by the NLP is less accurate than the direct estimation approach of the IRF.	   \\
		
		\noindent\textbf{Keywords:} Nonlinear Autoregressive Model, Impulse Response Function, Local Projection, Nonlinear Innovation, Recurrent Markov Process. \\
		\vspace{0in}\\
		\noindent\textbf{JEL Codes:} C01, C22, C32.  \\
	\end{abstract}

	\newpage
	
	\section{Introduction}
	
The Impulse Response Function (IRF), a notion initially introduced to economics by Frisch (1933), is a popular tool for macroeconomists to study the effects of shocks in a dynamic framework. In the current literature, there are two popular methods in which the IRF can be estimated. The first is called the direct approach which requires the practitioner to specify an economic model in the form of a (linear) Structural Vector Autoregression (SVAR) [see Sims (1980), Pesaran and Shin (1998), Christiano (2012), Ramey (2016), Kilian and Lutkepohl (2017) for examples]. The IRF is then estimated by simulating a perturbed and a baseline path implied by the autoregressive model, and computing the average difference. The second, which has garnered significant interest in recent studies, is an indirect nonparametric approach by means of a local linear projection [see Jorda (2005) for the definition of local linear projection\footnote{Also see Dufour and Renault (1998) for an interpretation in terms of short and long run causality. See their discussion of Impulse Response Functions on page 1113.}, Chang and Sakata (2007) for its derivation from long run regressions]. Each method boasts its own benefits [e.g. efficiency gains in the direct approach, robustness to misspecification in the indirect approach] and are directly linked to one another in the linear dynamic framework [Plagborg-Moller and Wolf (2021)]. \\
	

However, linear methods are rather restrictive and constitute a major source of misspecification risk. This arises due to the omission of nonlinear dynamics - for instance, ``linear models cannot adequately capture asymmetries that may exist in business cycle fluctuations" [Koop et al. (1996)]. Indeed, nonlinear behaviour has continued to feature in a number of applied settings, including but not limited to: [1] Cyclical behaviour due to a tent map or logistic effects [Frank and Stengos (1988)]. [2] Speculative bubbles and local trends that are captured by causal/noncausal models [Lanne and Saikkonen (2011), Jorda et al. (2015), Gouri\'eroux and Jasiak (2017), (2022), Bec et al. (2020)]. [3] Jumps and disasters [Christoffersen, Du and Elkawhi (2017), Wang (2019), Paul (2020), Gouri\'eroux, Monfort, Mouabbi and Renne (2021)]. [4] Regime switches, introduced to model asymmetric responses of output to oil prices for instance [Hamilton (2003), (2011), Kilian and Vigfusson (2011), (2017)]. [5] Modern macroeconomic models, which need to account for transition to low carbon economics [Metcalf and Stock (2020)], network effects due to supply chain failures [Kuiper and Lansink (2013)], fast technological changes arising from digitalization, potential epidemiological occurrences [Toda (2020)], asymmetric inflation expectations [Baqace (2020)], asymmetric effects of weather shocks [Cashin et al. (2017), DeTruchis et al. (2024)], or for nonlinear summaries of income inequality [Frost and Van Stralen (2018)]\footnote{Such phenomena cannot be captured by simply log-linearizing nonlinear dynamic models around an equilibrium value. In particular, ad hoc methods tend to linearize the first-order conditions of the intertemporal optimization problem, leading to linear rational expectation models. Among the multiplicity of solutions, only stationary fundamental solutions with linear dyanmics are selected, whereas other stationary fundamental solutions with speculative bubbles are rejected ex-ante [see Gourieroux, Jasiak and Monfort (2020) for the complete set of stationary solutions.]. }. \\

The econometric literature on nonlinear IRFs is rapidly developing, with a number of definitions that have been proposed and discussed [Gallant et al. (1993), Koop, Pesaran and Potter (1996), Pesaran and Shin (1998), Gouri\'eroux and Jasiak (2005), Goncalves et al. (2024a)]. Among these studies, it is widely accepted that a nonlinear IRF is characterized by a perturbation of the structural innovation at horizon 1, that is $\varepsilon_{t+1}+\delta$\footnote{See Lee (2025) for a discussion of the various definitions of the nonlinear IRF, and why only this current definition can be linked to nonlinear Forecast Error Variance Decompositions (FEVD).}. While estimation of nonlinear IRFs have been studied extensively in the nonlinear autoregressive framework, the development of estimation methods via nonlinear local projections in this context is rather limited. Moreover, it is an open question whether the link between the direct and indirect methods remains intact in the nonlinear framework. This apparent gap in the literature is what motivates the discussion in this paper. \\

We introduce the nonlinear autoregressive representation of a Markov process and assume that the practitioner has applied restrictions to ensure the identifiability of its structural innovations\footnote{ It is known that without additional parametric or semi-nonparametric restrictions, this representation is not unique as well as the associated nonlinear Gaussian innovation [Gourieroux and Lee (2025)]}. We then establish the link between the nonlinear IRFs implied by this autoregressive model, and the IRFs obtained by means of a nonlinear local projection. Furthermore, we propose nonparametric estimators for both approaches and demonstrate their asymptotic equivalence in the univariate nonlinear setting. For the multivariate setting, the curse of dimensionality severely inhibits the extension of our estimation results. Nonetheless, we also propose semi-parametric alternatives which allow us to mitigate some of the issues related to dimensionality. \\

This paper is organized as follows. Section 2 introduces the nonlinear autoregressive reprsentation. Section 3 provides definitions of shocks, nonlinear IRFs and the nonlinear local projection. We then propose nonparametric estimators of the IRF in Section 4 and discuss their asymptotic properties. The multivariate extension is considered in Section 5, and Section 6 concludes. All derivations of statistical properties of nonparametric and/or parametric estimators of nonlinear IRF are provided in the appendices and standard asymptotic properties of kernel estimators are reviewed in the online appendices.

	\section{Nonlinear Autoregressive Representation of a Markov Process}
	
First we introduce the nonlinear autoregressive process and discuss how to obtain its future trajectories. Then, we provide some simple examples of univariate Markov processes that fall under this framework. 

	\subsection{The Properties}
	
Let us consider an $n$-dimensional Markov process of order 1, denoted $y = (y_t)$, with values in  $\mathcal{Y}=\mathbb{R}^n$. If this process has a continuous distribution, the Markov condition can be written on its transition density as $f(y_t|\underline{y_{t-1}})=f(y_t|y_{t-1})$, where $\underline{y_{t-1}}=(y_{t-1},y_{t-2},...)$. There is an equivalent way of writing the Markov condition.
\begin{proposition}
$(y_t)$ is a Markov process of order 1 on $\mathcal{Y}=\mathbb{R}^n$ with a strictly positive transition density: $f(y_t|y_{t-1})>0$, $\forall y_t,y_{t-1}$, if and only if it admits a nonlinear autoregressive representation:
\begin{equation}\label{nlar}
	y_t = g(y_{t-1};\varepsilon_t), \ t \geq 1,
\end{equation} 
where the $\varepsilon_t$'s are independent and identically distributed $N(0,Id)$ variables, with $\varepsilon_t$ being independent of $y_{t-1}$, and $g$ is a one-to-one transformation with respect to $\varepsilon_t$, that is continuously differentiable with a strictly positive Jacobian. The process $(\varepsilon_t)$ defines a Gaussian nonlinear innovation of the process $(y_t)$. 
\end{proposition}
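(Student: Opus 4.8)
The plan is to prove the two implications separately, with almost all the substance lying in the ``only if'' direction.

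For the ``if'' direction ($\Leftarrow$), suppose the representation \eqref{nlar} holds. I would first argue that since $(\varepsilon_t)$ is i.i.d. and $\varepsilon_t$ is independent of $y_{t-1}$ -- and, through the recursion, of the entire past $\underline{y_{t-1}}$ -- the conditional law of $y_t = g(y_{t-1};\varepsilon_t)$ given $\underline{y_{t-1}}$ depends on the past only through $y_{t-1}$, which is exactly the Markov property. To obtain the transition density and its strict positivity, I would apply the change-of-variables formula, using that $g(y_{t-1};\cdot)$ is a $C^1$ diffeomorphism in $\varepsilon_t$: writing $\varepsilon_t = g^{-1}(y_{t-1};y_t)$,
\[
f(y_t \mid y_{t-1}) = \phi\bigl(g^{-1}(y_{t-1};y_t)\bigr)\,\left|\det \frac{\partial g^{-1}(y_{t-1};y_t)}{\partial y_t}\right|,
\]
where $\phi$ is the $N(0,Id)$ density. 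Since $\phi>0$ everywhere and the Jacobian is strictly positive by assumption, $f(y_t\mid y_{t-1})>0$ for all arguments.

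For the ``only if'' direction ($\Rightarrow$), the key tool is the Rosenblatt (Knothe--Rosenblatt) transformation, applied conditionally on $y_{t-1}$. Fixing $y_{t-1}$ and writing $y_t=(y_{1t},\dots,y_{nt})$, I would build a triangular map coordinate by coordinate: set $\varepsilon_{1t}=\Phi^{-1}\!\bigl(F_1(y_{1t}\mid y_{t-1})\bigr)$ and, for $j\ge 2$,
\[
\varepsilon_{jt} = \Phi^{-1}\!\bigl(F_j(y_{jt}\mid y_{1t},\dots,y_{j-1,t},y_{t-1})\bigr),
\]
where $F_j$ is the conditional c.d.f.\ of the $j$-th coordinate given the earlier coordinates and $y_{t-1}$, and $\Phi$ is the standard normal c.d.f. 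Because the transition density is strictly positive and continuous, each $F_j$ is continuous and strictly increasing in its own argument, so each step is invertible; the resulting probability-integral-transform variables are conditionally i.i.d.\ uniform, hence $\varepsilon_t\sim N(0,Id)$ conditionally on $y_{t-1}$. Inverting the triangular system sequentially yields $y_t = g(y_{t-1};\varepsilon_t)$. I would then verify the three regularity claims: invertibility in $\varepsilon_t$ from the triangular structure and strict monotonicity of each $F_j$; continuous differentiability from continuity of the transition density (so the $F_j$ are $C^1$) together with smoothness of $\Phi^{-1}$; and strict positivity of the Jacobian by noting that the triangular form makes $\det(\partial\varepsilon_t/\partial y_t)$ the product of the diagonal entries $\partial\varepsilon_{jt}/\partial y_{jt}=f_j(y_{jt}\mid\cdot)/\phi(\varepsilon_{jt})$, each strictly positive. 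Finally, to secure that the $\varepsilon_t$ are i.i.d.\ and independent of the whole past (not merely of $y_{t-1}$), I would invoke the Markov property: conditionally on $\underline{y_{t-1}}$ the law of $y_t$ depends only on $y_{t-1}$, so the conditional law of $\varepsilon_t$ given $\underline{y_{t-1}}$ is always $N(0,Id)$, forcing $\varepsilon_t \indep \underline{y_{t-1}}$ and hence mutual independence across $t$.

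The main obstacle is the multivariate step. In the one-dimensional case the construction collapses to the single transform $\varepsilon_t=\Phi^{-1}\bigl(F(y_t\mid y_{t-1})\bigr)$ and everything is immediate; for $n>1$ one must treat the sequential conditioning carefully and confirm that the triangular map is a genuine $C^1$ diffeomorphism with positive Jacobian. I expect the joint $C^1$ regularity of the conditional c.d.f.'s $F_j$ in \emph{all} their arguments -- including the conditioning value $y_{t-1}$ -- to be the most delicate point, and it may require strengthening ``continuous transition density'' to the natural smoothness assumption that renders these conditional c.d.f.'s continuously differentiable, which is what guarantees that $g$ inherits continuous differentiability in $(y_{t-1},\varepsilon_t)$.
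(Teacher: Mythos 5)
Your proposal is correct and follows essentially the same route as the paper: the paper's proof (Online Appendix A) constructs the innovations exactly via the sequential Knothe--Rosenblatt transformation $\varepsilon_{i,t}=\Phi^{-1}\bigl(F_{i,t}(y_{i,t}\mid y_{1,t},\dots,y_{i-1,t},\underline{y_{t-1}})\bigr)$ and argues, coordinate by coordinate, that the resulting vector is $N(0,Id)$ conditional on the past, hence independent of it. Your write-up is in fact somewhat more complete than the paper's, since you also spell out the easy converse (change of variables giving a strictly positive transition density) and the triangular-Jacobian and $C^1$ regularity checks that the paper leaves implicit.
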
 

\textbf{Proof:} This is directly deduced from Rosenblatt (1952) [see also Gourieroux and Jasiak (2005) for the terminology of Gaussian nonlinear innovation]. Q.E.D. \\

The condition of Gaussianity on the nonlinear innovation is just a normalization condition. Then, the nonlinear dynamic features can be introduced since $y_t$ is a nonlinear function of $y_{t-1}$ for a given $\varepsilon_t$, and/or a nonlinear function of $\varepsilon_t$ for a given $y_{t-1}$, and/or by nonlinear cross-effects of $y_{t-1}$ and $\varepsilon_{t}$. In particular this nonlinear representation exists even if $y_t$ has marginal and/or conditional fat tails. \\

\textbf{Remark 1:} It is known that, without additional parametric or semi-nonparametric restrictions on the dynamic Markov process, the nonlinear autoregressive representation, i.e. function $g$ and Gaussian nonlinear innovations, are not defined in a unique way [see the literature on nonlinear independent component analysis (ICA) [Comon (1994), Hyvarinen and Pajunen (1999), Hyvarinen et al. (2019), and Gourieroux and Lee (2025) for identification of nonlinear autoregressive representation.]. We assume in the rest of the paper that such restrictions are introduced, that will imply the identification of the structural shocks and of the associated nonlinear IRF.  \\ 

\textbf{Remark 2:} Proposition 1 is easily extended to a multidimensional Markov process of order $p$. The nonlinear autoregressive representation becomes:
	\begin{equation}\label{NLAR}
	y_t = g(y_{t-1},...,y_{t-p};\varepsilon_t).
\end{equation}
We consider the case $p=1$ for ease of exposition and to avoid the curse of dimensionality in nonparametric analysis.\\
%

\textbf{Remark 3:} The literature has suggested an alternative definition of nonlinear innovation (in the one dimensional framework) as:
\begin{equation*}
\varepsilon_t^* = \frac{y_t-\mathbb{E}(y_t|y_{t-1})}{\sqrt{\mathbb{V}(y_t|y_{t-1})}},
\end{equation*}
[see Blanchard and Quah (1989), Koop, Pesaran and Potter (1996), eq. (1), where $\varepsilon_t^*$ is denoted $v_t$.]. It is easily checked that $\varepsilon_t^*$ is not independent of $y_{t-1}$ in general and thus not appropriate for shocking $\varepsilon_t^*$ in the construction of nonlinear IRFs. \\

	\subsection{Future Trajectories}
	The nonlinear autoregressive representation \eqref{nlar} is appropriate for describing and/or simulating the future values of the process. We proceed by recursive substitution as follows. First, we have: 
	\begin{equation*}
		\begin{split}
			y_{t+1} & = g(y_{t};\varepsilon_{t+1}), \\
			 y_{t+2} & = g(y_{t+1};\varepsilon_{t+2}), 	 \\ 
			\vdots & \\
			 	y_{t+h}  & = g(y_{t+h-1},\varepsilon_{t+h}).
		\end{split}
	\end{equation*}
By substituting equation $y_{t+1}$ into $y_{t+2}$, we obtain:
\begin{equation*}
	 y_{t+2}  = g(y_{t+1};\varepsilon_{t+2}) =g[g(y_t;\varepsilon_{t+1});\varepsilon_{t+2}] \equiv g^{(2)}(y_t;\varepsilon_{t+1},\varepsilon_{t+2}).
\end{equation*}
We can perform these substitutions repeatedly for $y_{t+j}$, $j=1,...,h$ to obtain:
\begin{equation}
	y_{t+h}=g^{(h)}(y_t;\varepsilon_{t+1:t+h}),
\end{equation}
where $\varepsilon_{t+1:t+h}=(\varepsilon_{t+1},...,\varepsilon_{t+h}),h\geq1$. In particular we have the recursive formula:
\begin{equation}
	y_{t+h} = g^{(h)}(y_t;\varepsilon_{t+1:t+h})=g^{(h-1)}[g(y_t;\varepsilon_{t+1});\varepsilon_{t+2:t+h}].
\end{equation}

\subsection{Examples}

\textbf{Example 1: Conditionally Gaussian Model}\\

 In the one dimensional case, it includes the Double Autoregressive (DAR) model of order one [Weiss (1984), Borkovec and Kluppelberg (2001), Ling (2007)], introduced to account for conditional heteroscedasticity and given by:
\begin{equation}
	y_t = \gamma y_{t-1} + \sqrt{\alpha + \beta y_{t-1}^2} \ \varepsilon_t, \alpha >0,\beta\geq0,
\end{equation}
where $\varepsilon_t$ is $IIN(0,1)$. The DAR model has a strictly stationary solution if the Lyapunov coefficient is negative: 
\begin{equation}
 \mathbb{E}\log |\gamma + \sqrt{\beta} \ \varepsilon|<0.
\end{equation}
The conditional heteroscedasticity can create fat tails for the stationary distribution of process $(y_t)$. This process has second-order moments if moreover:
\begin{equation*}
	\gamma^2 + \beta < 1.
\end{equation*}
When $	\gamma^2 + \beta > 1$ and $\mathbb{E}\log |\gamma + \sqrt{\beta} \ \varepsilon|Z<0$, the process $(y_t)$ is strictly stationary with infinite marginal variance (but finite conditional variance). \\

\textbf{Example 2: Time Discretized Diffusion Process}\\

The discrete time process is $y_t=y(t),t=1,2,...$, where the underlying process $y(\tau)$ is defined in continuous time $\tau \in (0,\infty)$ by a (multivariate) diffusion equation:

\begin{equation}\label{diffusion}
	dy(\tau)=m[y(\tau)]d\tau + D[y(\tau)]dW(\tau),
\end{equation}
where $W$ is a Brownian motion with $\mathbb{V}[dW(\tau)]=d\tau$. 
The diffusion equation \eqref{diffusion} is the analogue of the conditionally Gaussian model written in an infinitesimal time unit. The class of time discretized diffusion contains the Gaussian AR(1) process, the autoregressive gamma (ARG) process, that is the time discretized Cox, Ingersoll, Ross process [Cox, Ingersoll and Ross (1985)], or the time discretized Jacobi process [Karlin and Taylor (1981), Gour\'ieroux and Jasiak (2009)], with values in $(-\infty,+\infty),(0,+\infty),(0,1)$, respectively. This is easily extended to the multivariate framework, including stochastic volatility models  [Hull and White (1987), Heston (1993)] and the Wishart Autoregressive models (WAR) [Gour\'ieroux, Jasiak and Sufana	 (2009)].

\section{Shocks, Impulse Response Functions and Local Projections}

The analysis of shocks and of their propagations in a nonlinear stochastic system, i.e. the impulse response functions, can serve different objectives: [i] It can be a technical tool to analyze the dynamic properties of the system and their robustness. [ii] It can be used as a counterfactual experiment: ``\textit{What would have arisen if...?}". [iii] In a more structural approach, they can be used to evaluate ex-ante the effects of a new policy. This latter objective is more structural and demands to define what can be controlled and at what magnitude. The literature agrees on the fact that a shock performed at date $t+1$ cannot have impacts on the past. It has to be written on an innovation that is on a (multidimensional) variable, independent of its past. It also agrees on the observation that these innovations are not necessarily uniquely defined and that this identification issue has to be taken into account. However, the definitions of the shocked innovations can vary.

\subsection{Shocks}
We consider a transitory shock of magnitude $\delta \in \mathbb{R}^n$ hitting the innovation $\varepsilon_{t+1}$. This is a system wide shock and not a variable specific shock. Thus, $\varepsilon_{t+1}$ is replaced by $\varepsilon_{t+1}+\delta$, while the past values and other future innovations are unchanged. Then we can compare the future trajectories before this shock: 
\begin{equation*}
y_{t+h} = g^{(h-1)}[g(y_t;\varepsilon_{t+1});\varepsilon_{t+2:t+h}], h\geq 1,
\end{equation*}
and the future trajectories after the shock is applied:
\begin{equation*}
y^{(\delta)}_{t+h} = g^{(h-1)}[g(y_t;\varepsilon_{t+1}+\delta);\varepsilon_{t+2:t+h}], h\geq 1. 
\end{equation*}
These trajectories correspond to a conceptual experiment, since the innovations do not necessarily have economic interpretations. Thus, we trace out the effects of the shock on the outcome future values. 

\subsection{Impulse Response Function (IRF)}
Both the initial trajectory and the shocked trajectory are stochastic since they depend on the unknown future values of the innovations $\varepsilon_{t+1},...,\varepsilon_{t+h}$. It is usual to replace the comparison of the stochastic trajectories by a comparison of their summaries. In a nonlinear dynamic framework, a drift of $\delta$ on $\varepsilon_{t+1}$ can have very complicated effects on the distribution of the process. To get enough information on these effects, several notions of Impulse Response Functions (IRFs) will have to be introduced. While our focus here is on the one dimensional case $n=1$, the extensions to the multidimensional framework are straightforward. In particular, it is known that the joint distribution of the components of $y_t$ is characterized by the knowledge of the one-dimensional distributions of the linear combinations $a'y_t$, with $a$ varying. Then, in the multidimensional framework, all the IRF's below can be applied to such linear combinations. \\

In the definitions below, $\mathbb{E}[\cdot|y_t]$ denotes a conditional expectation, that is the best approximation by a square integrable nonlinear function of $y_t$. In nonlinear dynamic models, it is not equal to the theoretical linear regression, that is the projection on the linear space generated by $y_t$. Thus, the distinction between conditional expectations and projection by linear regression is crucial in nonlinear dynamic frameworks \footnote{It also does not coincide with the linear regression on quadratic or cubic functions of $y_t$ [see Jorda (2005), Section II, for this suggestion, called ``flexible local projection"]. To insist on this difference, Billingsley (1986) proposed a double bar convention for the conditional expectation $\mathbb{E}[y_{t+h}||y_t]$. We use a single bar later on.}. This conditioning on the past is especially important in nonlinear dynamic models where the effects of a shock (i.e. the IRF) depends on both the current environment (Are we close to a tipping point?) and the magnitude of the shock (Is the shock sufficiently large to cross this tipping point?)\footnote{In this respect, an unconditional definition of IRF is not appropriate and can lead to misleading interpretations of the IRF [see Goncalves et al. (2021), Definition 1, or Goncalves et al. (2024a,b)].}. \\ 

\textbf{(i) IRF for pointwise prediction} \\ 

It is defined by: 
\begin{equation}\label{IRF_pointwise}
	IRF(h,\delta|y_t)=\mathbb{E}[y_{t+h}^{(\delta)}-y_{t+h}|y_{t}].
\end{equation}

This is a functional parameter that depends on both the horizon h and magnitude $\delta$ of the shock. It also depends on the history by means of $y_t$. In particular, the IRF has to be updated at each new observation. It is called the Generalized IRF in Koop, Pesaran, and Potter (1996)\footnote{It is also called expected IRF in some other literatures, to highlight the pointwise predictor by conditional expectations.}. \\

\textbf{(ii) IRF for pointwise prediction of the transformed $y$} \\ 

Let us consider a nonlinear transformation $a(y)$ of $y$. The IRF is defined by: 
\begin{equation}
	IRF(h,\delta,a|y_t)=\mathbb{E}[a(y_{t+h}^{(\delta)})-a(y_{t+h})|y_{t}].
\end{equation}
For instance, if $n=1$ and $a_y(y_{t+h})=\mathbbm{1}_{y_{t+h}<y}$, we get: 
\begin{equation}
IRF(h,\delta,a_y|y_t) = \mathbb{P}[y^{(\delta)}_{t+h}<y|y_t]-\mathbb{P}[y_{t+h}<y|y_t], \ \text{for any} \ h,\delta,y,
\end{equation}
and the possibility to compare the predictive distributions at all horizons. By inverting these cumulative distribution functions, we can also consider the effect of shocks on the conditional quantiles, that are the conditional Value-at-Risk (VaR) [Gouri\'eroux and Jasiak (2005)]. These IRF based on VaR are used to evaluate the sensitivity of reserves for banks in prudential supervision. \\

\textbf{(iii) IRF for dynamic features} \\ 

It is also important to evaluate the effects on dynamic features, such as the conditional serial dependence at lag 1. This would lead to the IRF's of the type: 
\begin{equation}
	IRF(h,\delta|y_t)=\mathbb{E}[y_{t+h}^{(\delta)}y_{t+h-1}^{(\delta)}-y_{t+h}y_{t+h-1}|y_{t}].
\end{equation}
This can be applied to the autocorrelation function (ACF) as well as the squared ACF. Indeed, in nonlinear dynamic models, a small shock can significantly change the dynamics, including the linear ACF due to chaotic features. \\

\textbf{(iv) Joint IRF} \\ 

All IRF's above are computed without taking into account the dependence between the trajectories. Other tranformations can reveal this cross-dependence, such as: 
\begin{equation}
	JIRF(h,\delta|y_t)=\mathbb{E}[(y_{t+h}^{(\delta)}-y_{t+h})^2|y_{t}],
\end{equation}
that includes a cross term: $\mathbb{E}[y_{t+h}^{(\delta)}y_{t+h}|y_t]$. 

\subsection{ Nonlinear Local Projections (NLP)}

Let us consider the $IRF(h,\delta)$ defined in \eqref{IRF_pointwise}, and introduce the pointwise prediction at horizon $h$:
\begin{equation}\label{mh_lp}
	m^{(h)}(y_t) = \mathbb{E}(y_{t+h}|y_t).
\end{equation}
By the Markov property, this pointwise prediction depends on the past by means of $y_t$ only. Such direct prediction has been called local projection in the linear dynamic framework [see Jorda (2005)]. By analogy with what is known in this linear framework [Plagborg-Moller and Wolf (2021), Montiel Olea and Plagborg-Moller (2021),(2022)], we will relate the IRF's and the NLP associated with prediction \eqref{mh_lp}.

\begin{proposition}
\begin{equation}\label{prop3}
\begin{split}
	IRF(h,\delta|y_t)&=\mathbb{E}\{m^{(h-1)}[g(y_t;\varepsilon_{t+1}+\delta)]-m^{(h-1)}[g(y_t;\varepsilon_{t+1})]|y_t\} \\ 
	& = \int \{m^{(h-1)}[g(y_t;\varepsilon+\delta)]-m^{(h-1)}[g(y_t;\varepsilon)]\}\phi(\varepsilon)d\varepsilon,\\ 
\end{split}
\end{equation}
where $\phi$ is the density of $N(0,Id)$, for $h\geq 1$\footnote{For $h=1$, $m^{(0)}$ is the identity function.}. 
\end{proposition}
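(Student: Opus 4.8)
The plan is to establish both equalities by conditioning on the intermediate state $y_{t+1}$ and combining the Markov property with the independence structure of the innovations. The central observation is that, under the shock, the trajectory from date $t+1$ onward is driven by the \emph{unshocked} future innovations $\varepsilon_{t+2:t+h}$, so it obeys exactly the same recursion as the original process, merely started from the perturbed state $y_{t+1}^{(\delta)}=g(y_t;\varepsilon_{t+1}+\delta)$ rather than from $y_{t+1}=g(y_t;\varepsilon_{t+1})$. Because the conditional expectation $\mathbb{E}[y_{t+h}\mid y_{t+1}=z]$ depends only on the value $z$ of the state and not on how that state was reached, the same continuation function $m^{(h-1)}$ governs both the shocked and the unshocked paths.

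Concretely, I would first use the recursive substitution of Section 2.2 to write $y_{t+h}=g^{(h-1)}(y_{t+1};\varepsilon_{t+2:t+h})$ and $y_{t+h}^{(\delta)}=g^{(h-1)}(y_{t+1}^{(\delta)};\varepsilon_{t+2:t+h})$, and then condition on the pair $(y_t,\varepsilon_{t+1})$. Since $\varepsilon_{t+2:t+h}$ is independent of $(y_t,\varepsilon_{t+1})$ while $y_{t+1}^{(\delta)}$ is a deterministic function of these conditioning variables, the conditional expectation collapses to the deterministic map $z\mapsto\mathbb{E}[g^{(h-1)}(z;\varepsilon_{t+2:t+h})]$ evaluated at $z=y_{t+1}^{(\delta)}$. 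By time-homogeneity of the representation \eqref{nlar} and the definition \eqref{mh_lp}, this map is precisely $m^{(h-1)}$, giving $\mathbb{E}[y_{t+h}^{(\delta)}\mid y_t,\varepsilon_{t+1}]=m^{(h-1)}(g(y_t;\varepsilon_{t+1}+\delta))$, and analogously $\mathbb{E}[y_{t+h}\mid y_t,\varepsilon_{t+1}]=m^{(h-1)}(g(y_t;\varepsilon_{t+1}))$ for the unshocked term.

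Applying the law of iterated expectations to integrate out $\varepsilon_{t+1}$ conditional on $y_t$, and subtracting, yields the first equality starting from the definition \eqref{IRF_pointwise}. For the second equality I would use that $\varepsilon_{t+1}$ is independent of $y_t$ with standard normal density $\phi$, so conditioning on $y_t$ leaves the law of $\varepsilon_{t+1}$ unchanged and the conditional expectation becomes the displayed integral against $\phi(\varepsilon)\,d\varepsilon$.

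The step requiring the most care is the claim that the continuation function is the same $m^{(h-1)}$ in the shocked case. This is exactly where the Markov property enters: the future evolution depends on the state at $t+1$ only through its realized value, so perturbing how that state was generated --- by $\varepsilon_{t+1}+\delta$ instead of $\varepsilon_{t+1}$ --- does not alter the continuation dynamics. Once this is in place, the remaining manipulations are routine applications of conditioning and independence, and the boundary case $h=1$ follows immediately by taking $m^{(0)}$ to be the identity.
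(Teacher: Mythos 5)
Your proposal is correct and takes essentially the same route as the paper's proof: recursive substitution $y_{t+h}=g^{(h-1)}(y_{t+1};\varepsilon_{t+2:t+h})$, the law of iterated expectations, and the identification of the inner expectation with $m^{(h-1)}$ via independence of the innovations, followed by integration against $\phi$ using the independence of $\varepsilon_{t+1}$ and $y_t$. The only cosmetic difference is that you condition the inner expectation on $(y_t,\varepsilon_{t+1})$, arguably the cleaner choice, whereas the paper writes the conditioning set as $(y_t,\varepsilon_{t+2:t+h})$; both collapse to the same computation.
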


\textbf{Proof:} We have:
\begin{equation*}
\begin{split}
& \mathbb{E}[y_{t+h}^{(\delta)}-y_{t+h}|y_t]\\
= & \mathbb{E}\{g^{(h-1)}[g(y_t;\varepsilon_{t+1}+\delta);\varepsilon_{t+2:t+h}]-g^{(h-1)}[g(y_t;\varepsilon_{t+1});\varepsilon_{t+2:t+h}]\} \\
= & \mathbb{E}\{\mathbb{E}\{g^{(h-1)}[g(y_t;\varepsilon_{t+1}+\delta);\varepsilon_{t+2:t+h}]-g^{(h-1)}[g(y_t;\varepsilon_{t+1});\varepsilon_{t+2:t+h}]|y_t,\varepsilon_{t+2:t+h}\}|y_t\}\\
& \text{(By the Law of Iterated Expectation)} \\
=&  \mathbb{E}[m^{(h-1)}[g(y_t;\varepsilon+\delta)]-m^{(h-1)}[g(y_t;\varepsilon)]|y_t],
\end{split}
\end{equation*}
by definition of $m^{(h-1)}$. The result follows. Q.E.D. \\ 

Proposition 3 is easily extended to the other types of IRF's. The right hand side of equation \eqref{prop3} defines the NLP interpretation of the IRF. As seen below, it differs from the standard formula since the effect of the shock is nonlinear in $\delta$ and has to be integrated out with respect to $\varepsilon$. The integration in equation \eqref{prop3} can be avoided in special cases. 

\begin{corollary}
	Let us assume that:
	\begin{equation*}
		m^{(h-1)}[g(y_t,\varepsilon_{t+1})] = a^{(h-1)}(y_t)\varepsilon_{t+1}+b^{(h-1)}(y_t).
	\end{equation*}
Then:
\begin{equation*}
	IRF(h,\delta|y_t) = a^{(h-1)}(y_t)\delta.
\end{equation*}
\end{corollary}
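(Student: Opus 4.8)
The plan is to start from the exact identity established in Proposition 3, namely
\begin{equation*}
IRF(h,\delta) = \int \left\{ m^{(h-1)}\bigl[g(y_t;\varepsilon+\delta)\bigr] - m^{(h-1)}\bigl[g(y_t;\varepsilon)\bigr] \right\} \phi(\varepsilon)\, d\varepsilon,
\end{equation*}
and simply substitute the assumed affine-in-$\varepsilon$ form of the composite function $m^{(h-1)}[g(y_t;\cdot)]$. The whole corollary is a direct computation once this substitution is made, so I would not expect any genuine obstacle; the only thing to watch is bookkeeping with the Gaussian moments.

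First I would write out the integrand under the hypothesis $m^{(h-1)}[g(y_t;\varepsilon_{t+1})] = a^{(h-1)}(y_t)\varepsilon_{t+1} + b^{(h-1)}(y_t)$. Evaluating at $\varepsilon+\delta$ and at $\varepsilon$ and subtracting, the intercept term $b^{(h-1)}(y_t)$ cancels and the $\varepsilon$-dependence cancels as well, leaving the constant-in-$\varepsilon$ difference
\begin{equation*}
m^{(h-1)}\bigl[g(y_t;\varepsilon+\delta)\bigr] - m^{(h-1)}\bigl[g(y_t;\varepsilon)\bigr] = a^{(h-1)}(y_t)\,\delta.
\end{equation*}
This is the crucial simplification: because the map is affine in the shock argument, the increment in $\varepsilon$ drops out entirely and the bracket no longer depends on $\varepsilon$.

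Next I would insert this back into the integral. Since the integrand is now the constant $a^{(h-1)}(y_t)\,\delta$ (with respect to $\varepsilon$), it factors out of the integral, and the remaining $\int \phi(\varepsilon)\,d\varepsilon = 1$ because $\phi$ is the $N(0,Id)$ density. Hence
\begin{equation*}
IRF(h,\delta) = a^{(h-1)}(y_t)\,\delta \int \phi(\varepsilon)\, d\varepsilon = a^{(h-1)}(y_t)\,\delta,
\end{equation*}
which is exactly the claimed conclusion.

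The only point that might deserve a sentence of care is whether $a^{(h-1)}(y_t)$ and $b^{(h-1)}(y_t)$ are genuinely functions of $y_t$ alone and not of $\varepsilon_{t+1}$; this is guaranteed by the way the hypothesis is stated, so no integrability or measurability issue arises beyond finiteness of the first conditional moment, which is already implicit in the existence of $m^{(h-1)}$. In short, there is no hard part: the corollary is an immediate consequence of the linearity of the integral applied to Proposition 3 under an affine structural assumption, and the Gaussian density merely contributes a factor of one.
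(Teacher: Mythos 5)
Your proof is correct and follows exactly the argument the paper intends: the corollary is stated immediately after Proposition 3 precisely because it follows by substituting the affine form into the integral formula \eqref{prop3}, whereupon the increment is constant in $\varepsilon$ and the Gaussian density integrates to one. Nothing is missing, and your remark that $a^{(h-1)}(y_t)$, $b^{(h-1)}(y_t)$ depend only on $y_t$ is the right (and only) point of care.
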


The condition in Corollary 1 is satisfied in the linear dynamic models usually considered in the literature. It provides the same IRF's as the IRF comparing two shocked trajectories, one with $\varepsilon_{t+1}$ replaced by $\delta$, and another with $\varepsilon_{t+1}$ replaced by $0$\footnote{This corresponds to the so-called MIT definition of IRF. However, this definition is not appropriate for more complicated nonlinear features as noted in Kolesar and Plagborg-Moller (2024).}. Indeed, only the difference matters. In this case it can also be normalized by focusing on $\delta=1$, since $IRF(h,\delta|y_t)$ becomes linear in $\delta$. However, without the strong linearity restriction in Corollary 1, the IRF will depend on the current environment $y_t$, and on the magnitude of the multivariate shock $\delta$, in a nonlinear way (in general). Other simplifcations of formula \eqref{prop3} are expected in specific dynamic models as fully recursive structural models [see Goncalves et al. (2021), Section 5 with the i.i.d. assumption on the ``regressors" and the discussion in Section 5.3]. As mentioned by the authors this assumption of full recursivity is not economically plausible in general.

\subsection{Sequence of Shocks}

We have considered in the subsections above the case of a transitory (i.e. isolated) shock performed at date $t+1$. In a linear dynamic framework, it is standard to also consider sequence of shocks and in particular permanent shocks applied after this date. However, the analysis of the IRF following a sequence of shocks is significantly different in a nonlinear dynamic framework. For illustration, let us consider below two consecutive shocks of magnitude $\delta_1$ and $\delta_2$ at dates $t+1$ and $t+2$ respectively. Then, the future trajectories are:
\begin{equation}
	y_{t+h}^{(\delta_1,\delta_2)} = g^{(h-2)}[g^{(2)}(y_t;\varepsilon_{t+1}+\delta_1,\varepsilon_{t+2}+\delta_2);\varepsilon_{t+3:t+h}], \ h \geq 2,
\end{equation}
and the IRF becomes:
\begin{equation}
	IRF(h,\delta_1,\delta_2|y_t) = \mathbb{E}\left[y_{t+h}^{(\delta_1,\delta_2)}-y_{t+h}|y_t\right].
\end{equation}
The IRF can be written in terms of nonlinear local projections as:
\begin{equation}
	\begin{split}
IRF(h,\delta_1,\delta_2|y_t) & = \int\int\left\{m^{(h-2)}\left[g^{(2)}(y_t;\varepsilon_{t+1}+\delta_1,\varepsilon_{t+2}+\delta_2)\right]\right.\\
& - \left.m^{(h-2)}\left[g^{(2)}(y_t;\varepsilon_{t+1},\varepsilon_{t+2})\right]\right\}\phi(\varepsilon_1)\phi(\varepsilon_2)d\varepsilon_1 d\varepsilon_2, \ h \geq 2.
	\end{split}
\end{equation}
This expression extends formula \eqref{prop3} in Proposition 3 to a sequence of two consecutive shocks [see Diercks et al. (2023) for another attempt to extend the local projection to this framework]. In the linear dynamic framework, this IRF can be decomposed as:
\begin{equation}\label{3.11}
	IRF(h,\delta_1,\delta_2|y_t) = 	IRF(h,\delta_1,0|y_t) + 	IRF(h-1,0,\delta_2|y_t) ,
\end{equation}
that is directly deduced from the IRF's of the isolated shocks. This equality \eqref{3.11} is no longer valid in the nonlinear dynamic framework. In general, the $IRF(h;\delta_1,\delta_2)$ is not a deterministic function of the IRF's associated with the isolated shocks and, morever, the combined effect of the two shocks can have diminishing, neutral, or amplifying effects on the values $y_{t+h}$, due to cascading effects. Therefore ``with nonlinear models considering the IRF that measures the effect of a shock of a given size hitting at a given period can be very misleading" [Koop (1996), p 136]. Instead of computing the IRF for all values of the pair $(\delta_1,\delta_2)$, it has also been proposed to draw them in some distribution in order to account for the uncertainty on $(\delta_1,\delta_2)$ [Koop (1996)]. \\

Note that the definition and expansion of IRF are written directly from the autoregressive representation without deriving the nonlinear moving average representation of the process [see the discussion in Adamek et al. (2024), p. 324, for the drawback of such an approach even in the linear dynamic framework]. Indeed this approach based on autoregressive representation is more appropriate for deriving by simulation the different IRFs.

\section{Nonparametric Inference for n=1}

An argument for local projection analysis of IRF's is that this is a nonparametric approach that is less sensitive to possible mispecification of the lag in the autoregressive parametric dynamic. This argument has been criticized [see Kilian and Kim (2009), p.1466]. Moreover, it is given assuming a linear dynamic model, whereas one of the main sources of misspecification is likely the omitted nonlinear dynamics\footnote{The omitted nonlinearities do not only concern omitted conditional heteroscedasticity [see e.g. Kilian and Kim (2011), Herbst and Johannsen (2021), Montiel-Olea and Plagborg Moller (2021) for bias correction of the IRF standard errors under conditional heteroscedasticity either by bootstrap, or by expansions.]}. Let us now compare the direct IRF estimation approach and the LP estimation approach in a nonlinear dynamic framework. Due to the curse of dimensionality of conditional nonparametric analysis, we consider the one dimensional case $n=1$ and the observations $y_1,...,y_T$ of the process. By the Markov property, we also assume the same lag equal to 1 for the direct and NLP estimation approaches. Then the two approaches will agree and nevertheless are both nonparametric. 

\subsection{Two Nonparametric Estimation Approaches}

When $n=1$, the nonlinear AR(1) [NLAR(1)] model is identifiable (up to a change of sign on $\varepsilon_t$). Then we can consider a nonparametric functional estimator of function $g$ as:
\begin{equation}
	\hat{g}_T(y_{t-1};\varepsilon)=\hat{Q}_T[\Phi(\varepsilon)|y_{t-1}],
\end{equation}
based on a kernel estimation of the conditional quantile of $y_t$ given $y_{t-1}$:
\begin{equation}\label{kapp}
	\hat{Q}_T(\alpha|y) = \argmin_{q} \sum^{T}_{t=1}K\left(\frac{y_{t-1}-y}{b_T}\right)\{\alpha(y_t-q)^+ + (1-\alpha)(y_t-q)^-\},
\end{equation}
where $K$ is the kernel, $b_T$ the bandwidth and $\alpha$ the critical level\footnote{The kernel and bandwidth could depend on the level $\alpha$ to treat differently the smoothing in the standard values and in the tails. This question is out of the scope in the present paper. }. Then, this estimated NLAR(1) model can be used to simulate future trajectories $y^s_{t+k}$, $k=1,..,h$, $s=1,2,...,S$, by applying recursively the nonlinear autoregressive equation:

\begin{equation}
\hat{y}^s_{t+k}=\hat{g}_T(\hat{y}^s_{t+k-1};\varepsilon^s_{t+k}), k=1,...,h,
\end{equation} 
where the $\varepsilon_t$'s are independently drawn from the standard normal distribution, with starting value $\hat{y}^s_t=y_t$ \footnote{These simulated values depend on $T$ by means of $\hat{g}_T$. This index $T$ is omitted below for exposition, and they also depend on $y_t$.}. For illustration we focus on the IRF for $\mathbb{E}(y_{t+h}|y_t)$. \\

\textbf{(i) Direct estimation of} $IRF(h,\delta|y_t)$\\ 

The direct approach follows the steps below: \\

Step 1: Simulate $\hat{y}_{t+h}^s$ and $\hat{y}_{t+h}^{(\delta),s}$, $s=1,...,S$, based on $\hat{g_T}$. \\

Step 2: Compute: 
\begin{equation}\label{dd_irf}
\widehat{IRF}_T(h,\delta|y_t) = \frac{1}{S}\sum_{s=1}^{S}(\hat{y}_{t+h}^{(\delta),s}-\hat{y}_{t+h}^s).
\end{equation}

\textbf{(ii) Estimation by means of Local Projection (Indirect Estimation)}\\ 

The indirect approach is based on formula \eqref{prop3} in Proposition 3. The steps become: \\

Step 1: Compute the Nadaraya-Watson estimate\footnote{This estimator is based on a smoothing with respect to the conditioning value $y$. This differs from the smooth local projections in which the smoothing is with respect to $h$ [Barnichon and Brownlees (2019), Plagborg-Moller and Wolf (2021)].} of the function $m^{(h)}(\cdot)$. This provides:
\begin{equation}
\begin{split}
	\hat{m}_{T,h}(y) & = \argmin_{m} \sum^{T-h}_{t=1}K\left(\frac{y_t-y}{b_T}\right)(y_{t+h}-m)^2\\
	& = \sum_{t=1}^{T-h}\left[K\left(\frac{y_t-y}{b_T}\right)y_{t+h}\right]/\sum_{t=1}^{T-h}K\left(\frac{y_t-y}{b_T}\right),\\
\end{split}
\end{equation}
where $K$ is a kernel and $b_T$ a bandwidth. \\

Step 2: Simulate $\varepsilon^s_{t+1}$, $\hat{y}_{t+1}^s$ and $\hat{y}_{t+1}^{(\delta),s}$, $s=1,...,S$.  \\ 

Step 3: Compute: 
\begin{equation}\label{lp_irf}
\widehat{\widehat{IRF}}_T(h,\delta|y_t) = \frac{1}{S}\sum_{s=1}^{S}[\hat{m}_T^{(h-1)}(\hat{y}_{t+1}^{(\delta),s})-\hat{m}_T^{(h-1)}(\hat{y}_{t+1}^s)].
\end{equation}

We will compare, in the next subsection, the asymptotic properties of these functional estimators given in \eqref{dd_irf} and \eqref{lp_irf}. But clearly, to get $IRF(h,\delta|y_t)$, $h=1,...,H$ for a given $\delta$, the first approach requires the nonparametric estimation of function $g$ at $2HS$ values (that are the $\hat{y}^s_{t+k},\hat{y}_{t+k}^{(\delta),s},k=1,...,h, s=1,...,S)$. The second approach requires the nonparametric estimation of function $g$ at $2S$ values (that are the $\hat{y}^s_{t+1},\hat{y}_{t+1}^{(\delta),s},s=1,...,S)$, and of functions $\hat{m}^{(h)}$ at $2(h-1)S$ values. Therefore, they seem numerically equivalent in the nonlinear dynamic framework\footnote{Note that the direct and indirect estimators coincide for $h=1$.}. \\ 

Their asymptotic accuracies will depend on the nonparametric estimation method that is used and in particular on the choice of kernel and bandwidth. Nevertheless, the first approach estimates in an ``efficient" way the transition at horizon 1 and we use this estimator to compute the IRF. The second approach is mixing an ``efficient" estimator of transition $g$, with an estimator $\hat{m}_T^{(h-1)}$, that does not account for its known dependence with respect to $g$. Therefore, we expect the second approach to be less accurate asymptotically under coherent choices of kernel and bandwidth in the two approaches. We show in the next subsection that this is not the case. 

\subsection{Asymptotic Properties}

Let us now derive the asymptotic behaviours of the functional estimators of interest that are: $\hat{g}_T(\cdot)$, $\widehat{IRF}_T(h,\cdot|y_t)$, $\widehat{\widehat{IRF}}_T(h,\cdot|y_t)$. All these estimators are consistent with respect to their theoretical counterparts and  they converge at different speeds. We perform in Appendix A the asymptotic expansions of the estimated IRF's in terms of the nonparametric estimates of the primitive characteristic of the conditional transitions. This allows for deriving the asymptotic behaviour of the estimated IRFs. For expository purposes, the derivations are performed in the one dimensional case and for stationary processes\footnote{The asymptotic results could be extended to nonstationary Markov processes if they satisfy a null recurrence property [Karlsen and Tjostheim (2011), Gourieroux and Jasiak (2019)]. This is for instance the case of the Gaussian random walk. Then the speed of convergence is modified, will depend on the number of regenerations in the sampling period, but not on the initial value $y_0$ of the process. We do not develop this extension in this paper [see Wright (2000), Gospodinov (2004), Pesavento and Rossi (2007), Montiel-Olea and Plagborg-M\o ller (2021) for the effect of unit roots in the linear local projection, which is the analogue in the semi-parametric linear framework].}. Then, the asymptotic variances are functions of the conditional cumulative distribution functions (c.d.f.), $F(\mathfrak{z}|y)$, the transition density, $f(\mathfrak{z}|y)$, and the conditional quantile, $Q(\alpha|y)$. \\

\begin{proposition} The nonparametric direct and local projection estimators have the following asymptotic properties: 	\\
	
	(i) Direct Estimation \\
	
	Let us assume $T \rightarrow \infty$, $S \rightarrow \infty$, with $S/T \rightarrow 0$, $b_T \rightarrow 0$, with $Tb_T^{5/3} \rightarrow \infty$. Then, $\widehat{IRF}_T(h,\delta|y_t)$ converges to $IRF(h,\delta|y_t)$ for any $h,\delta$. Morever, we have the convergence in distribution:
	\begin{equation*}
		\sqrt{Tb_T}(\widehat{IRF}_T(h,\delta|y_t)-IRF(h,\delta|y_t)) \rightarrow^d N(0,\sigma^2(h,\delta|y_t)),
	\end{equation*}
where the expression of $\sigma^2(h,\delta|y_t)$ is derived in Appendices A.1 for $h=1$ and A.2 for $h \geq 2$. \\

(ii) Local Projection \\

We have: 

\begin{equation*}
	\sqrt{Tb_T}[\widehat{\widehat{IRF}}_T(h,\delta|y_t)-\widehat{IRF}_T(h,\delta|y_t)] = o_p(1),
\end{equation*}
In particular, the NLP estimator $\widehat{\widehat{IRF}}_T(h,\delta|y_t)$ converges to $IRF(h,\delta|y_t)$, with the same speed of convergence as the direct estimator $\widehat{IRF}(h,\delta|y_t)$ and they have the same asymptotic distribution and asymptotic variance. \\ 
\end{proposition}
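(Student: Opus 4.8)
The plan is to express both estimators as smooth functionals of the \emph{single} nonparametric primitive $\hat g(y;\varepsilon)=\hat Q(\Phi(\varepsilon)\mid y)$, to eliminate the Monte Carlo (simulation) error by letting $S\to\infty$, and then to propagate the estimation error $\hat g-g$ through a functional delta-method, reading off the limit from the known kernel/Bahadur expansion of the conditional-quantile estimator $\hat Q(\alpha\mid y)$. Throughout I would condition on the sample, so that the fresh draws $\varepsilon^s$ enter only through an average on which a law of large numbers can act.

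For part (i), I would first show that, conditionally on the data, $\widehat{IRF}_T(h,\delta)=\frac1S\sum_s(\hat y_{t+h}^{(\delta),s}-\hat y_{t+h}^{s})$ converges as $S\to\infty$ to the integrated functional $\Psi(\hat g):=\int\{\hat g^{(h-1)}[\hat g(y_t;\varepsilon+\delta);\varepsilon_{t+2:t+h}]-\hat g^{(h-1)}[\hat g(y_t;\varepsilon);\varepsilon_{t+2:t+h}]\}\,\phi\,d\varepsilon\,d\varepsilon_{t+2:t+h}$, the Monte Carlo remainder being of order $O_p(S^{-1/2})$, hence negligible relative to the $\sqrt{Tb_T}$ rate under the maintained conditions on $S$. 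Since $\Psi(g)=IRF(h,\delta)$ by Proposition 3, it remains to linearize $\Psi(\hat g)-\Psi(g)$ in $\hat g-g$. The key observation is that the term in which $\hat g-g$ is applied at horizon $1$, i.e. at the \emph{fixed} conditioning value $y_t$, is a linear functional of the quantile-process error $\{\hat Q(\alpha\mid y_t)-Q(\alpha\mid y_t)\}_\alpha$ whose randomness is common across all levels $\alpha=\Phi(\varepsilon\pm\delta)$ (the same local window of width $b_T$ around $y_t$); integrating over $\varepsilon$ therefore does not enlarge the effective sample, and this term is $O_p((Tb_T)^{-1/2})$. By contrast, the terms in which $\hat g-g$ enters the iterations at horizons $\ge 2$ are evaluated at simulated values that \emph{vary} with $\varepsilon_{t+2:t+h}$, so integrating against $\phi$ smooths the local noise and yields the faster rate $O_p(T^{-1/2})$. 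Collecting the surviving horizon-$1$ term, I would invoke the kernel CLT for $\hat Q(\cdot\mid y_t)$ under the mixing/recurrence assumptions to obtain asymptotic normality at rate $\sqrt{Tb_T}$, the bandwidth conditions ($Tb_T^{5/3}\to\infty$ together with the implicit undersmoothing bound) guaranteeing that the $O(b_T^2)$ bias is negligible so that the limit is centred at $IRF(h,\delta)$. The variance $\sigma^2(h,\delta)$ is then the variance of this linear functional of the quantile process, expressed through $F,f,Q$ and, for $h\ge2$, the weights $\partial m^{(h-1)}/\partial y$; this is the computation carried out in Appendices C.1 and C.2.

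For part (ii), I would note that the LP and direct estimators share \emph{exactly} the same horizon-$1$ simulation step ($\hat g(y_t;\varepsilon+\delta)$ and $\hat g(y_t;\varepsilon)$) and differ only in how the continuation $m^{(h-1)}$ is estimated: the direct estimator uses the iterated map $\tilde m^{(h-1)}$ built from $\hat g$, whereas the LP estimator uses the Nadaraya--Watson estimator $\hat m^{(h-1)}$. Writing the difference of the two IRF estimators as $\int\{(\hat m^{(h-1)}-\tilde m^{(h-1)})(\hat g(y_t;\varepsilon+\delta))-(\hat m^{(h-1)}-\tilde m^{(h-1)})(\hat g(y_t;\varepsilon))\}\phi\,d\varepsilon$, each of the two integrals is a \emph{smoothed} functional of a nonparametric error evaluated over the $\varepsilon$-varying range of points $\hat g(y_t;\cdot)$: against the weight $\rho$ induced by the push-forward of $\phi$, both $\int(\hat m^{(h-1)}-m^{(h-1)})(y)\,\rho(y)\,dy$ and $\int(\tilde m^{(h-1)}-m^{(h-1)})(y)\,\rho(y)\,dy$ are $\sqrt T$-consistent regular functionals, so the whole difference is $O_p(T^{-1/2})=o_p((Tb_T)^{-1/2})$. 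This gives $\sqrt{Tb_T}\,[\widehat{\widehat{IRF}}_T-\widehat{IRF}_T]=o_p(1)$, so the two estimators share the horizon-$1$ influence function, the same $\sqrt{Tb_T}$ speed and the same $\sigma^2(h,\delta)$; the efficiency loss flagged in Section~4.1 is then purely a higher-order phenomenon.

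The main obstacle is making the rate dichotomy rigorous and uniform: one must show precisely that a nonparametric error \emph{localized} at the fixed point $y_t$ contributes at $\sqrt{Tb_T}$, whereas the same type of error, once \emph{integrated} against the smooth simulation density (over either future shocks or evaluation points), contributes only at $\sqrt{T}$. This requires (a) a Bahadur representation for $\hat Q$ and $\hat m^{(h-1)}$ that is uniform in the quantile level and in the evaluation point, with a remainder that is $o_p((Tb_T)^{-1/2})$ after integration; (b) control of the $O(b_T^2)$ smoothing bias of the integrated functionals under the bandwidth restriction; and (c) a central limit theorem for the surviving kernel term valid for the dependent (recurrent, mixing) process $(y_t)$. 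Handling the compounding of the linearization error through the $h$-fold composition $g^{(h-1)}$ — i.e. verifying that every cross-horizon remainder is genuinely $o_p((Tb_T)^{-1/2})$ — is the most delicate bookkeeping.
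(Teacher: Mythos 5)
Your proposal matches the paper's own argument in all essentials: the paper likewise sets $S=\infty$ and works with the integrated functionals, linearizes $\hat g - g$ through the Bahadur-type expansion of the conditional quantile/cdf (Lemma 1, Appendices C.1 and the online appendix), kills the terms in which a nonparametric error is integrated against a smooth density by exactly the same Ait-Sahalia smoothing-to-parametric-rate argument (Appendix C.3), and identifies the surviving horizon-1 influence functions of the two estimators via iterated expectations (Appendix C.4), with the Donsker/Brownian-bridge computation giving $\sigma^2(h,\delta)$ in terms of $F$, $f$, $Q$. The only difference is bookkeeping: the paper expands the direct and LP estimators separately and then matches their leading multiplicative weights ($\prod_{j=2}^{h}\partial g^{(j)}/\partial y'$ reintegrated over future shocks versus $\partial m^{(h-1)}/\partial y'$ reintegrated over $\varepsilon_{t+1}$), whereas you bound the difference of the two estimators directly and state explicitly the discarding of the horizon-$\geq 2$ evaluation terms of the direct estimator, a step the paper's Appendix C.2/C.4 leaves implicit.
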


\textbf{Proof:} See Appendices A.1, A.3, A.4 for the expansions and online appendix D for regularity conditions. \\

Recall the standard arguments usually provided in semi-parametric linear dynamic models when comparing the direct and indirect approaches. If the semi-parametric model (i.e. the lag) is well specified, the direct approach is more efficient than the local projection approach. If the semi-parametric model (i.e. the lag) is misspecified (for instance, if the true model is a VAR$(\infty)$), the local projection approach (applied with a sufficiently large controlled lag order \footnote{See Xu (2023), Montiel-Olea et al. (2024), for statistical inference for local projections when the controlled lag order diverges.} \footnote{By increasing the number of lags in the nonlinear local projection approach, we will encounter the curse of dimensionality on nonparametric approaches (see discussion in Section 5).}) is better. The proposition above shows that, under the Markov assumption, both approaches are consistent and asymptotically equivalent in a pure nonparametric approach, for both linear and nonlinear dynamic models. Moreover, they have the same speed of convergence and the same asymptotic distribution. Therefore, neither nonlinear local projections, nor nonlinear autoregressions dominates\ the other in terms of asymptotic inference. They can differ however by their finite sample properties, and also by the computational time that they require. \\

The results of Proposition 3 can be compared with the standard results when the IRF is computed by linear local projection (LP). In this linear semi-parametric framework, the associated multi-step error forecasts are generally serially correlated [see Montiel-Olea and Plagborg-Moller (2021), p 1793] and the standard errors of the IRF are usually estimated by heteroscedasticity and autocorrelation (HAC) consistent approaches [Jorda (2005), Ramey (2016), Kilian and Lutkepohl (2017)]. Such an adjustment is not needed with the nonparametric approaches, since the functional estimators are not only local in $h$, but also local in the value of the conditioning variable. 
 
\subsection{An Illustration}

This section illustrates the above concepts through the lens of the DAR(1) model in Example 1. In particular, we consider the data generating process:
\begin{equation*}
	y_t = 0.5 y_{t-1} + \sqrt{1 +0.5 y_{t-1}^2} \ \varepsilon_t,
\end{equation*}
where $y_0=0.2$ and $\varepsilon_t \sim N(0,1)$. These parameter values ensure that there exists a strictly stationary solution with second-order moments. 

\subsubsection{The Data Generating Process}

We simulate 200 observations of the DAR(1) process described above and plot both its trajectory and empirical density in Figure 1. The empirical density features rather fat tails, that is a standard effect of conditional heteroscedasticity on the marginal (i.e. stationary) distribution.
\begin{figure}[h]
	\centering
	\includegraphics[width=1\linewidth]{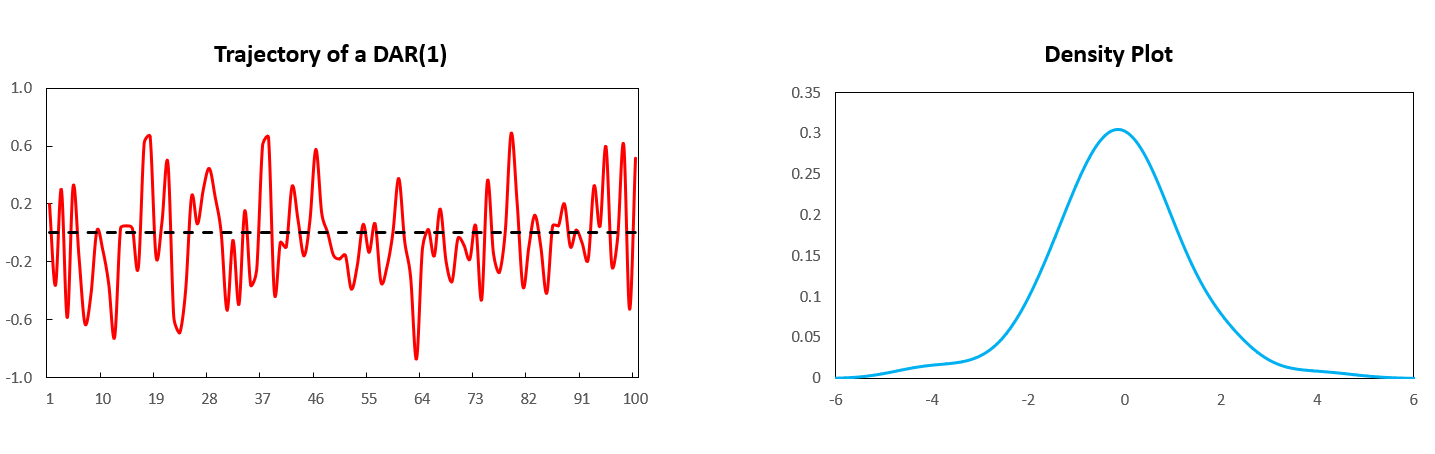}
	\caption{The trajectory (left figure) and empirical density (right figure) of the DAR(1) model.}
\end{figure}

\subsubsection{Estimators}

The DAR model is semi-parametric, since its parameters and error distribution are not known. For estimation of the IRF, we first need to estimate the function $g$. To do so, we first perform a Quasi Maximum Likelihood Estimation (QMLE) for the parameters $\rho$, $\alpha$ and $\beta$, that is a MLE as if the $\varepsilon_t$ were Gaussian. The solution to the quasi log-likelihood can be written as [Ling (2007)]: 
\begin{equation*}
(\hat{\rho},\hat{\alpha},\hat{\beta})' =	\argmax_{\rho,\alpha,\beta} \sum_{t=2}^{200}-\frac{1}{2}\left[\ln(\alpha + \beta y_{t-1}) - \frac{1}{2}\frac{(y_t - \rho y_{t-1})^2}{(\alpha + \beta y_{t-1})}\right].
\end{equation*}
We perform a three dimensional grid search on the cube $[0.01,1.20]^3$ at intervals of size 0.01. Our results yield estimates $(\hat{\rho}_{MLE},\hat{\alpha}_{MLE},\hat{\beta}_{MLE})'= (0.35,0.9,0.46)'$, to be compared with the true values of the DGP, that are $(0.5, 1, 0.5)$. 
\begin{figure}[h]
	\centering
	\includegraphics[width=1\linewidth]{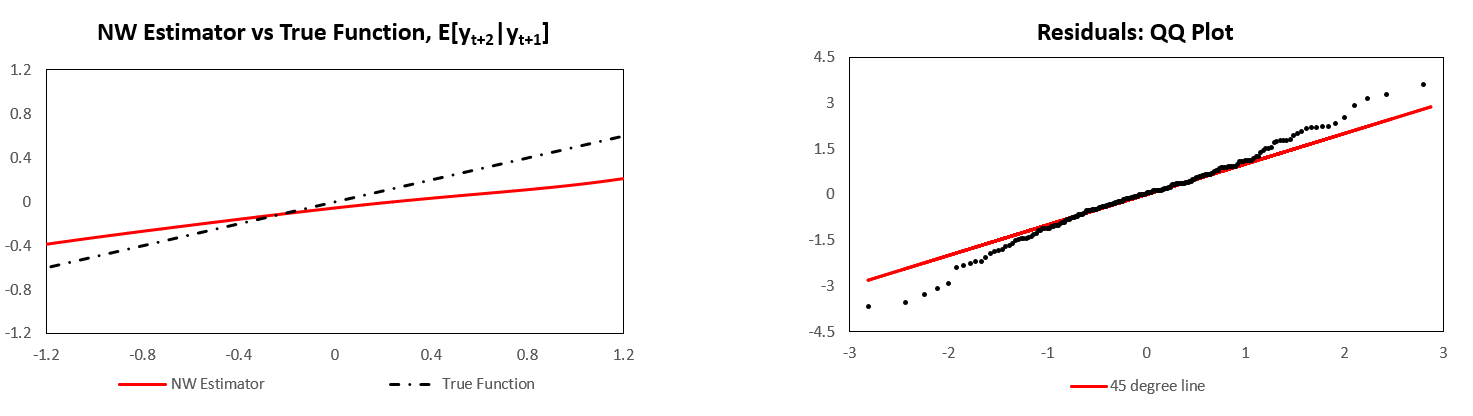}
		\caption{The comparison of the estimated and true functional form of $m$ (left figure) and the QQ plot of the residuals (right figure). }
\end{figure}
For the local projection approach, we will have to run the nonparametric regression of $y_{t+h}$ on $y_t$ to obtain the Nadaraya-Watson estimates $\hat{m}_T^{(h)}$ in the NLP equation. In Figure 2, we compare the estimated regression line $\hat{m}_T^{(1)}$ and the true function, as well as the QQ plot of the residuals. 

\subsubsection{IRF}

We present shocks of magnitude $-1, -0.5, 0.5$ and $1$ below in Figure 3. Each graph features three IRFs: [1] The true IRF based on equation \eqref{IRF_pointwise}; [2] The IRF obtained by the (semi-parametric) direct estimation based on equation \eqref{dd_irf} with QMLE estimation; [3] The IRF obtained by means of nonlinear local projection based on equation \eqref{lp_irf}. We see that, in all cases, holding the simulation size equal, the direct estimation method provides a more accurate IRF. This is not surprising since the nonlinear local projection is based on the Nadaraya-Watson estimator, which takes longer to converge than the QMLE method. 

\begin{figure}[h]
	\centering
	\includegraphics[width=1\linewidth]{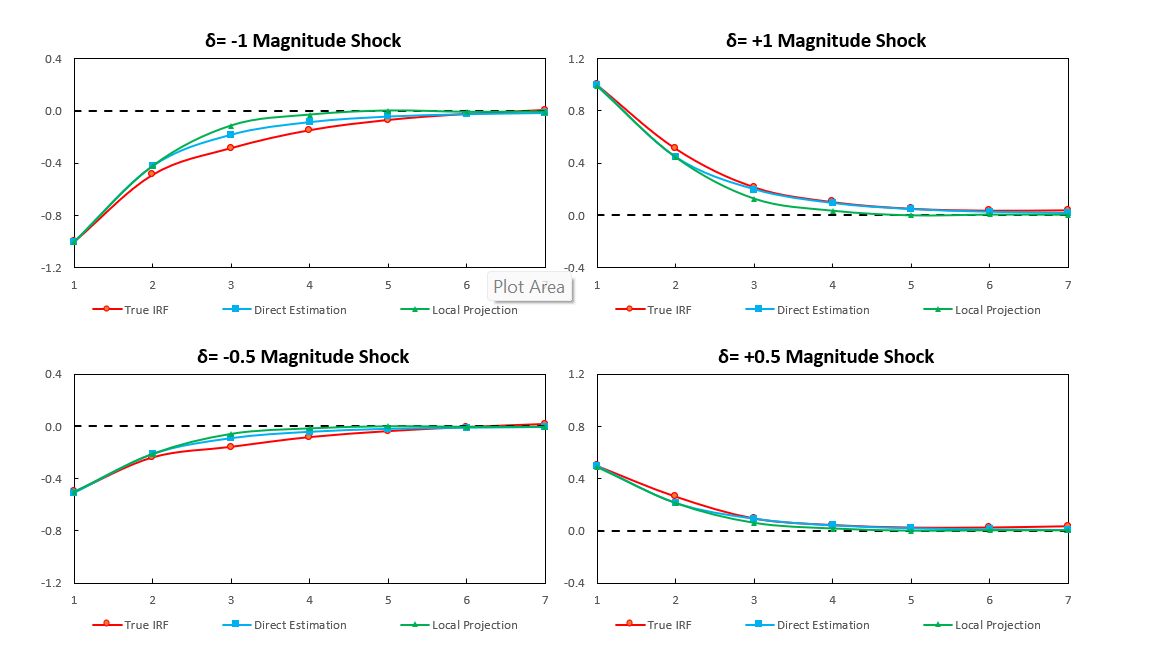}
	\caption{A comparison of IRFs constructed using different estimation methods.}
\end{figure}

\section{Statistical Inference for Multivariate Case}

The nonparametric approaches of Section 4 cannot be easily extended to the multivariate framework with a dimension $n$ larger or equal to 3. Indeed, we encounter the curse of dimensionality due to the conditioning variable in kernel estimation of conditional cdfs and conditional expectations. The curse of dimensionality will be reached even faster for Markov processes of order $p$, $p>1$. Indeed, the rate of convergence of these nonparametric estimators deteriorates quickly [i.e. at an exponential rate] as the dimension grows [see Geenens (2011), Conn and Li (2018)]. In this section, we consider semi-parametric models in which the nonparametric dimension is reduced\footnote{To circumvent the curse of dimenisonal for either $n$, or $p$ larger than 2, say, Goncalves et al. (2024 a,b) propose to modify the conditioning set. Instead of conditioning by $y_{t-1}$ (when $p=1$ for instance), they suggest to condition by a one dimensional function $a(y_{t-1})$ of $y_{t-1}$. This solves the curse of dimensionality issue, but at the cost of a loss in information, and a change of the definition of the IRF, which depends on the transformation $a(\cdot)$.  }. 

\subsection{Semiparametric Models}

Let us now provide additional examples of multivariate Markov processes. \\

\textbf{Example 3: Nonlinear Conditionally Gaussian Model}\\

It is written as: 
\begin{equation}
	y_t = m(y_{t-1};\theta) + D(y_{t-1};\theta)\varepsilon_t,	
\end{equation}
where $\varepsilon_t\sim IIN(0,Id)$ and is assumed independent of $y_{t-1}$. This specification is used in Koop, Pesaran and Potter (1996, eq.(1)) for instance\footnote{See also Goncalves et al. (2021), Ballarin (2025) without conditional heteroskedasticity}. Except in the Gaussian linear VAR(1) framework: $m(y_{t-1};\theta)=A(\theta) y_{t-1}, D(y_{t-1};\theta)=D(\theta)$, independent of the past, the process is not conditionally Gaussian at horizons larger or equal to 2. The nonlinear effects will appear on the IRF's after $h \geq 2$.  \\

For instance, the  multivariate extension of the DAR model can be written as:
\begin{equation*}
	y_t = \Phi y_{t-1} + (\text{diag}\ h_t)^{1/2}\varepsilon_t,
\end{equation*}
with $h_t=a + B(y^2_{1,t-1},...,y^2_{n,t-1})$ [see Zhu et al. (2017)]. \\

\textbf{Example 4:} \textbf{Strong Linear Causal Structural SVAR(1) Model} \\ 

The process is defined as the stationary solution to the linear difference equation:
\begin{equation}\label{slcvar}
	y_t = A y_{t-1} + D u_t,
\end{equation}
where the eigenvalues of autoregressive matrix $A$ have a modulus strictly smaller than 1, $D$ is an invertible matrix, and the $u_t$'s are i.i.d, with independent components. If the cdf of $u_{it}$, $i=1,...,n$, is $F_i$, assumed to be invertible, then this model can be rewritten with Gaussian errors as:
\begin{equation}\label{slcvarge}
	y_t = A y_{t-1} + D[F^{-1}_i \circ \Phi(\varepsilon_{i,t})], 
\end{equation}
where $\Phi$ is the cdf of the standard normal distribution. The strong linear SVAR(1) model \eqref{slcvarge} has in general a nonlinear dynamic feature when it is written with respect to Gaussian errors, by the nonlinear transformation $ F^{-1}_i \circ \Phi=Q_i \circ \Phi, \ i=1,...,n$, where $Q_i=F^{-1}_i$ is the quantile function. However, it is linear in $y_{t-1}$ and contains no cross-effects of $y_{t-1}$ and $\varepsilon_t$. We get a semi-parametric model that includes vector parameter $\beta=((\text{vec A})',(\text{vec D})')'$ and functional parameters $Q_i$ for $i=1,...,n$. These functional parameters depend on one-dimensional arguments, which allows us to circumvent the nonparametric curse of dimensionality. \\

Let us now discuss the respective roles of specifications \eqref{slcvar} and \eqref{slcvarge}. 
\begin{enumerate}
	\item If $u_t=\varepsilon_t$, it is well-known that the parameter $A$ is identifiable, whereas $D$ is not identifiable. Therefore, the IRF becomes identifiable only if $A(\theta)$, $D(\theta)$ are (jointly) parameterized with identifiable $\theta$. The IRF has the simple form:
	\begin{equation*}
		IRF(h,\delta)=D\delta + ... + A^hD\delta = (Id-A)^{-1}(Id-A^{h+1})D\delta.
	\end{equation*}
	\item If $u_t$ differs from $\varepsilon_t$ and if at most one component of $u_t$ is Gaussian, it is known that the model \eqref{slcvar} is semi-parametrically identifiable (up to permutation of indexes), that is parameters $A,D$ and functional parameters $F_i(\cdot)$, $i=1,...,n$ are identifiable by applying the identification results in linear ICA [see Comon (1994) for the linear ICA, and Gourieroux, Monfort and Renne (2017) for the application to SVAR models]. 
\end{enumerate}

The transformations of the error term $u_t$ into a Gaussian error $\varepsilon_t$, when the components of $u_t$ are independent, can also be used to extend the models of Example 3 to non-Gaussian errors and then render semi-parametric the models in Example 3. Therefore, the reduction of the curse of dimensionality is due to the assumption of independent components for $u_t$, that is the possibility to  replace the nonparametric estimator of the joint density of $u$ by the nonparametric estimation of the marginal distributions $F_i$, or $Q_i$, for $i=1,...,n$.

\subsection{Direct vs Indirect Estimation}

Let us now introduce the different (semi-) parametric estimation approaches of the model parameters, then of the IRF. 

\subsubsection{Nonlinear Structural VAR (NSVAR)}

Let us consider the nonlinear autoregressive model:
\begin{equation}\label{nsvar}
	y_t = g(y_{t-1},u_t;\beta),
\end{equation}
where the components of the errors are $u_{i,t}=F_i^{-1}\cdot \Phi (\varepsilon_{i,t})$, $i=1,...,n$, and $(\varepsilon_t)$ is $IIN(0,Id)$. Then, model \eqref{nsvar} can be written under the equivalent form:
\begin{equation}\label{eqnsvar}
	G(y_t,y_{t-1};\beta) = u_t,
\end{equation}
where $u_t = F_i^{-1}\circ \Phi (\varepsilon_{i,t})=Q_i\circ \Phi(\varepsilon_{i,t})$, $G$ is a demixing transformation, that is, the inverse of function $g$ with respect to $u_t$, and $\beta$ is a parameter. Then we can consider a parametric and a semi-parametric version of this model: \\ 

 \textbf{1. Parametric Model:} \\
	
	When the distributions $F_i$, $i=1,...,n$, are parameterized, with parameter $\alpha$, the model \eqref{nsvar} and \eqref{eqnsvar} becomes parametric with global parameters $\theta'=(\alpha',\beta')$. It can be estimated by maximum likelihood for instance. We denote $\hat{\theta}_T$ the corresponding estimator. \\
	
\textbf{2. Semi-Parametric Model:}\\ 

Under the assumption that the parameter $\beta$ is identifiable, it is possible to construct a consistent and asymptotically normal estimator of $\beta$. Then, this estimator satisfies: 
\begin{equation}\label{abeta}
	\sqrt{T}(\hat{\beta}_T-\beta_0) \rightarrow N(0,V(\beta_0)),
\end{equation}
for instance by applying a minimization of the cross covariances and autocovariances of nonlinear functions $G(y_t,y_{t-1},\beta)$ [see Gouri\'eroux and Jasiak (2023), Velasco (2023), for Generalized Covariance (GCov) estimators]. Once the parameter $\beta$ is estimated, we can compute the estimated errors as: $\hat{u}_{t,T} = G(y_t,y_{t-1};\hat{\beta}_T)$, $t=1,...,T$, and deduce nonparametric estimators $\hat{Q}_{i,T}$ of $Q_i$, for $i=1,...,n$ by applying the kernel approach \eqref{kapp} to the series of residuals. 

\subsubsection{Direct Estimation}

 \textbf{1. Parametric Model:} \\

The estimated IRF is obtained by plugging in the theoretical expression of the IRF the maximum likelihood estimator $\hat{\theta}_T$ of $\theta$: 
\begin{equation}
	\widehat{IRF}_T(h,\delta|y_t) = IRF(h,\delta,\hat{\theta}_T|y_t).
\end{equation}

\textbf{2. Semi-Parametric Model:}\\ 

For exposition, let us consider the case $h=1$. In the semi-parametric framework, the IRF becomes a function of parameter $\beta$ and functional parameters $Q_i$, $i=1,...,n$, $IRF(h,\delta;\beta, (Q_i)|y_t)$, that is:
\begin{equation}\label{sp_irf_true}
	\begin{split}
		&	IRF(1,\delta;\beta,(Q_{i})) \\
		= & \mathbb{E}\left[g(y_t,u_{t+1}(\delta);\beta)-g(y_t,u_{t+1};\beta)\right] \\ 
		= & \mathbb{E}\left[g(y_t,\text{vec}(Q_i\circ\Phi(\varepsilon_{i,t+1}+\delta));\beta)-g(y_t,\text{vec}(Q_i\circ\Phi(\varepsilon_{i,t+1}));\beta)\right] \\ 
		=& \int ... \int\left[ g(y_t,\text{vec}(Q_i\circ\Phi(\varepsilon_{i}+\delta));\beta)-g(y_t,\text{vec}(Q_i\circ\Phi(\varepsilon_{i}));\beta)\right]\phi(\varepsilon_1)...\phi(\varepsilon_n)d\varepsilon_1...d\varepsilon_n,\\
	\end{split}
\end{equation}
where $\text{vec}(a_i)$ denotes the $n$-dimensional vector with components $a_i$. When the model is well-specified with true parameters $\beta_0$, $Q_{i,0}$, $i=1,...,n$, the true IRF is $IRF(1,\delta;\beta_0,(Q_{i,0}))$. It can be estimated by plugging in the estimator $\hat{\beta}_T$ of $\beta$ and the functional estimator $\hat{Q}_{i,T}$ of $Q_i$ for $i=1,...n$. The resulting estimator $IRF(1,\delta,\hat{\beta}_T,(\hat{Q}_{i,T}))$ is consistent, asymptotically normal, and its asymptotic distribution is obtained by the delta method applied to both parameter $\beta$ and functional parameters $Q_i$, $i=1,...,n$ [see Appendix A.5]. 
\begin{proposition} 
	Under standard regularity conditions, the asymptotic distribution of the estimated IRF is such that: 
\begin{equation}\label{delta_irf}
	\sqrt{Tb_t}\left(IRF\left[1,\delta;\hat{\beta}_T,(\hat{Q}_{i,T})\right]-IRF\left[1,\delta;\beta_0,(Q_{i,0})\right]\right)\xrightarrow[]{d}N(0,V(1,\delta,\beta_0,(Q_{i,0}))),
\end{equation}
where $V(1,\delta,\theta_0,(Q_{i,0}))$ is the asymptotic variance given in Appendix A.5.
\end{proposition}

\textbf{Proof:} See Appendix A.5.\\

\textbf{Remark 4:} The limiting distributions in \eqref{delta_irf} have been written for any given pair $h$,$\delta$. They can be extended to several pairs by taking into account the asymptotic covariances between $\widehat{IRF}_T(h,\delta|y_t)$ and $\widehat{IRF}_T(h^*,\delta^*|y_t)$ for two pairs $(h,\delta)$ and $(h^*,\delta^*)$. This joint inference can be used to provide confidence bands for the term structure of the IRF for a given $\delta$, and/or for the confidence bands of the IRF function of the magnitude of the shock for a given horizon\footnote{See Inoue et al. (2024), Section 5.2, for ``simultaneous inference" in a linear dynamic setting.}. \\

In the nonlinear autoregressive model, the transition at horizon larger than 2 has no closed form expression and has to be approximated by simulation based on the estimated model $y_t = g(y_{t-1},\varepsilon_t,\hat{\beta}_T,(\hat{Q}_{i,T}))\equiv \hat{g}_T(y_{t-1},\varepsilon_t)$. This is done along the same steps as in Section 4.1. The number $S$ is chosen by the econometrician. If $S$ is chosen much larger than the number of observations $T$, that is if $\frac{S_T}{T}$ tends to infinity with $T$, then the asymptotic behaviour of the estimated IRF based on simulations is the same as in \eqref{delta_irf}. \\

\textbf{Remark 5:} Instead of the recursion:
\begin{equation*}
	\hat{y}_{t+h}^s = g(\hat{y}_{t+k-1}^s,(\hat{Q}_{i,T})\left[\Phi(\varepsilon_{i,t+k}^s)\right],\hat{\beta}_T), k = 1,2,...,
\end{equation*} corresponding to the recursion (4.3), it would be possible to apply a ``bootstrap" recursion of the form:
\begin{equation*}
	\hat{y}_{t+k}^s = g(\hat{y}_{t+k-1}^s,\hat{u}_{t+k,T}^s,\hat{\beta}_T), k = 1,2,...,
\end{equation*} where the $\hat{u}^s_{t+k,T}$ are independently drawn among the residuals of $\hat{u}_{t,T}$, $t=1,...,T$. The asymptotic properties of this bootstrap version however, is beyond scope of this paper and left for future research.

\subsubsection{Indirect Approach (Local Projection) for n =2}

Due to the curse of dimensionality of nonparametric approaches, the nonlinear local projection approach will faces challenges for a dimension strictly greater than 2 given the number of observations usually available in macroeconomic applications. If $n=2$, we can use a local projection approach based on the formula of Proposition 2 in which the short term responses $g(y_t,\varepsilon_t+\delta;\theta)$, $g(y_t,\varepsilon_t;\theta)$ are estimated parametrically, whereas the conditional expectation $m^{(h-1)}(\cdot)$ is estimated nonparametrically by the Nadaraya-Watson approach, say. This approach avoids the simulation based parametric approach used in Section 5.1.1 to approximate $m^{(h-1)}(\cdot)$, but with the drawback of a nonparametric rate of convergence infinitely smaller than the rate of convergence in the direct parametric and semi-parametric approach in Section 5.2.2. Under the assumption of a well-specified model, the rates of convergence for the estimated IRF to their true values will be the parametric rate of $1/\sqrt{T}$ in a parametric direct approach, the nonparametric rate of $1/\sqrt{Tb_T}$ for one-dimensional kernel estimators in the semi-parametric direct approach and the nonparametric rate of two dimensional kernel estimators in the indirect local projections approach\footnote{This curse of dimensionality problem is also encountered in the multivariate VAR model in which the local projection implies a regression with a rather large number of control variables and is sometimes approximated by sparsified LASSO [see Adamek et al. (2024)].}. To summarize, the nonlinear local projection approach cannot profit from the reduced nonparametric dimensionality of the semi-parametric model (5.4).

\section{Illustrations}

Let us now illustrate  the estimation approaches in a bivariate nonlinear dynamic model.

\subsection{The Simulated Data}

Let us consider a bivariate Double Autoregressive model defined by:
\begin{equation*}
	Y_t = \Phi Y_{t-1} + \begin{bmatrix}
		\sqrt{h_{1,t}}u_{1,t} \\
		\sqrt{h_{2,t}}u_{2,t} \\
	\end{bmatrix},
\end{equation*}
where $h_t = b+ A\begin{bmatrix}
	y^2_{1,t-1} \\
	y^2_{2,t-1} \\
\end{bmatrix}$ and the same experiment as in Zhu et al. (2017), Section 3. The parameter $\beta=\left[(\text{vec $\Phi$})',b',(\text{vec A})'\right]'$ is fixed at $\beta = (0.4,0.1,-0.3,0.4,0.1,0.2,0.3,0.2,0.1,0.4)'$, $u_{1,t}$ and $u_{2,t}$ are independent with distributions $N(0,1)$ and student $t(4)$, respectively. The number of observations is fixed to $T=400$, and we plot the simulated series below in Figure 4. 

\begin{figure}[h]
	\centering
	\includegraphics[width=1\linewidth]{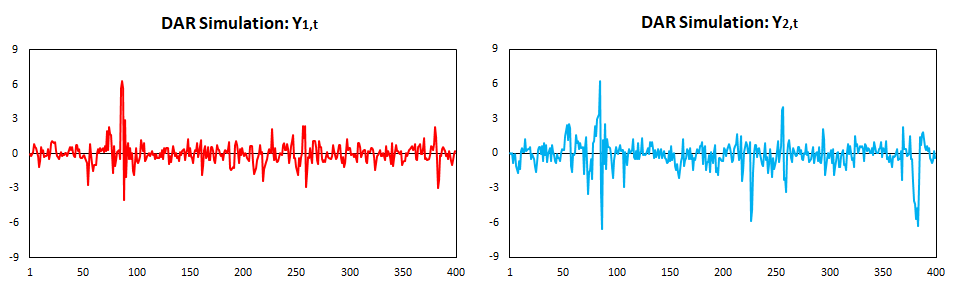}
	\caption{The simulated series}
\end{figure}

\subsection{Semi-Parametric Estimation}

The parameter $\beta$ is estimated in two steps. First, we regress by OLS $Y_t$ on $Y_{t-1}$, that provides an estimator $\hat{\Phi}_T$ of $\Phi$ and deduce $\hat{\upsilon}_{t,T} = y_t -\hat{\Phi}_Ty_{t-1}$. In a second step, we regress by OLS the squared $\upsilon$'s, i.e. $\begin{bmatrix}
	\hat{\upsilon}^2_{1,t,T} \\
	\hat{\upsilon}^2_{2,t,T} \\
\end{bmatrix}$ on  $\begin{bmatrix}
y^2_{1,t-1} \\
y^2_{2,t-1} \\
\end{bmatrix}$ with intercept. The results of these regressions are tabulated below in Tables 1 and 2, with the usual OLS standard errors. Note that these standard errors are not adjusted for conditional heteroscedasticity (in both regressions), for the effect of the first step estimation in the second regression, and the fact that the student distribution t(4) has no fourth-order moments. As a consequence, the standard errors in Table 2 are likely underestimated. \\
\begin{table}[h!]
	\centering
	\begin{minipage}[t]{0.45\textwidth}
		\centering
		\caption{OLS $Y_t$ on $Y_{t-1}$}
		\begin{tabular}{lcc}
			\hline
			\hline
			& $y_{1,t}$ & $y_{2,t}$ \\
			\hline
			$y_{1,t-1}$ & 0.30740 & -0.39328 \\
			& (0.04824) & (0.05925) \\
			$y_{2,t-1}$ & 0.01030 & 0.39006 \\
			& (0.03535) & (0.04341) \\
			\hline
			\hline
		\end{tabular}
	\end{minipage}%
	\hfill
	\begin{minipage}[t]{0.45\textwidth}
		\centering
		\caption{OLS $\hat{\upsilon}^2_{t,T}$ on $Y_{t-1}^2$}
		\begin{tabular}{lcc}
			\hline
			\hline
			& $\hat{\upsilon}^2_{1,t,T}$ & $\hat{\upsilon}^2_{2,t,T}$ \\
			\hline
			$y^2_{1,t-1}$ & 0.52377 & 0.13618 \\
			& (0.03567) & (0.02384) \\
			$y^2_{2,t-1}$ & 0.13180 & 0.43814 \\
			& (0.05274) & (0.03525) \\
			\hline
			\hline
		\end{tabular}
	\end{minipage}
\end{table}

Using the results for $\hat{b}_T$, $\hat{A}_T$, and $\hat{v}_{t,T}$, we compute: 
\begin{equation*}
\hat{u}_{t,T}=\left[\textup{diag}\left(\hat{b}_T+\hat{A}_T\begin{pmatrix}
	y^2_{1,t-1} \\
	y^2_{2,t-1} \\
\end{pmatrix}\right)\right]^{-1/2}(y_t-\hat{\Phi}_Ty_{t-1}),
\end{equation*}
 for $t=2,...,T.$ These are used to derive the estimated quantile functions $\hat{Q}_{i,T}$, $i=1,2$. We provide below in Figure 5 their Q-Q plots with respect to the standard normal distribution. 
 
 \begin{figure}[h]
 	\centering
 	\includegraphics[width=1\linewidth]{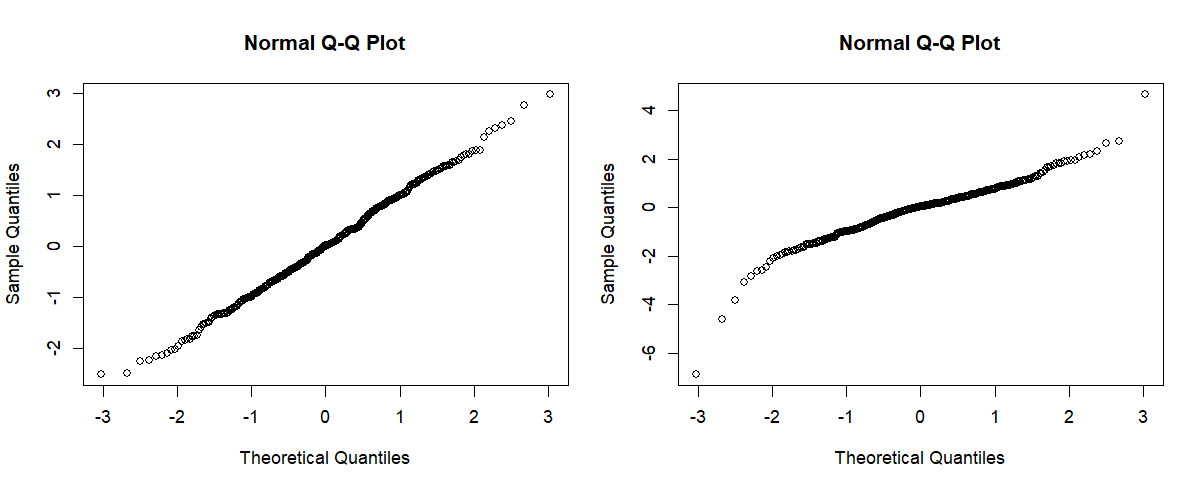}
 	\caption{QQ plot with respect to the normal distribution. On the left is the QQ plot for $\hat{u}_{1,t,T}$ and on the right is the QQ plot for $\hat{u}_{2,t,T}$.}
 \end{figure}

As expected, the values in the QQ plot for $\hat{u}_{1,t,T}$ lie close to the 45 degree line since the residuals are normally distributed. On the other hand, the QQ plot for $\hat{u}_{2,t,T}$ deviate from the 45 degree line further from the mean due to the fatter tails of the t-distribution.

\subsection{The ``True" and Estimated IRF}

To visualize the nonlinear local projection in the context of the bivariate DAR model, we consider four different shock scenarios with the same magnitude but different signs: $\delta = (+0.5,+0.5)$, $\delta = (+0.5,-0.5)$, $\delta = (-0.5,+0.5)$ and $\delta = (-0.5,-0.5)$. The results are featured in Figure 6 below. The red lines and blue lines represent the responses of the first and second variables, respectively. The dotted and solid lines represent the ``true" IRFs and the estimated IRFs, respectively. Note that a larger sample of 3000 observations were used to generate nonlinear local projections. This reveals that the nonparametric approach is computationally expensive and requires a rather large data set on hand. For instance, even in the case of a $\delta=(+0.5,-0.5)$ shock to the process, the estimated IRF is still far away from the truth.\\

The behaviours of the plots showcase the nonlinear dynamics at play for the bivariate DAR process. We again see the absence of shock symmetry, as a positive and negative shock of the same magnitude to not seem to mirror one another in all scenarios. We also see that the response of $Y_{1,t}$ and $Y_{2,t}$ are different for shocks of the same magnitude and direction. This reflects not only the different parameterization for each variable in the DGP, but also the presence of non-Gaussianity in the second component of the innovation. 
\begin{figure}[h!]
	\centering
	\includegraphics[width=0.8\linewidth]{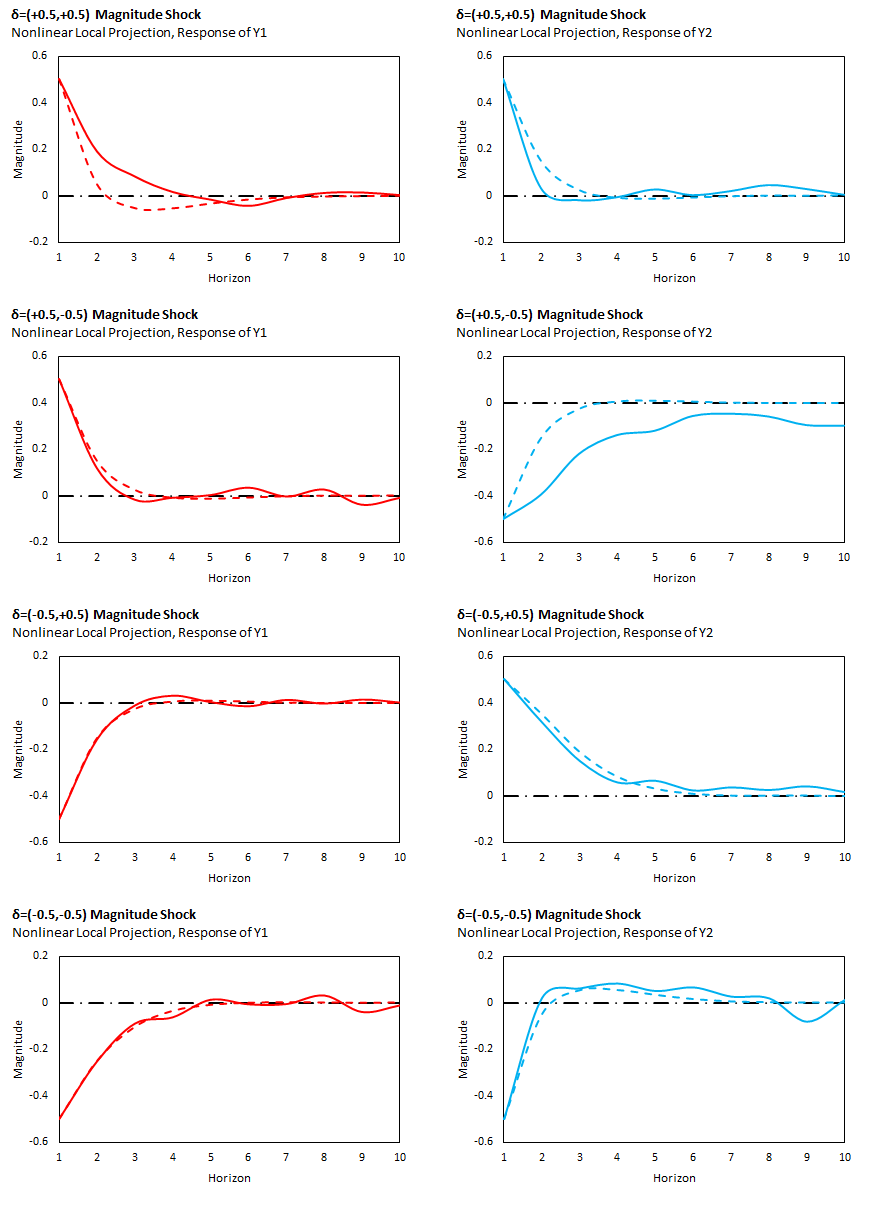}
	\caption{Four shock scenarios with $\delta = (+0.5,+0.5)$, $\delta = (+0.5,-0.5)$, $\delta = (-0.5,+0.5)$ and $\delta = (-0.5,-0.5)$. The red lines represent the response of the first variable $Y_{1,t+h}$, and the blue lines represent the response of the second variable $Y_{2,t+h}$. The dotted lines denote the ``true" IRFs implied by the DAR model, and the solid lines are the IRFs implied by the nonlinear local projection. }
	\label{fig:irf1}
\end{figure}

\newpage

\section{Concluding Remarks}

This paper has extended the notions of innovations, impulse response functions and local projections to the nonlinear dynamic framework. In particular, we have precisely defined the nonlinear Gaussian innovation and the IRFs corresponding to shocks on these innovations and derive an integral expression of the IRF that extends to a nonlinear dynamic framework the idea of local projection. Then this integral expression can be used to extend the idea of local projection at least for small dimension of the variable to be predicted. In the one-dimensional framework, we have precisely compared the asymptotic distributions of the IRF estimated nonparametrically by direct and an indirect local  projection approach, respectively, and show that they are asymptotically equivalent. This generalizes a result already derived in the literature for linear dynamic models, based on the Frisch-Waugh-Lovell Theorem. For a larger dimension, we can encounter the curse of dimensionality of conditional nonparametric inference, especially to apply the nonlinear analogue of local projections. Then we extend the analysis to a special class of semi-parametric nonlinear dynamic models to circumvent the curse of dimensionality. This semi-parametric family includes, in particular, models with conditional means, conditional heteroscedasticity and non-Gaussian errors. Nevertheless, only the case of bivariate models seem feasible in practice for nonparametric nonlinear local projection and the extensions of local projections for dimension 2 and semi-parametric analyses are infinitely less precise compared to direct parametric and even semi-parametric approaches in nonlinear structural autoregressive models. \\

\newpage
	
\section{References}



Adamek, R., Smeekes, S., and I., Wilms (2024). Local Projection Inference in High Dimension. \textit{Econometrics Journal}. 27, 323-342. \\

Ait-Sahalia, Y. (1993). Nonparametric Functional Estimation with Applications to Financial Models. \textit{Doctoral Dissertation}, Massachusetts Institute of Technology. \\


Ballarin, G. (2025): Impulse Response Analysis of Structural Nonlinear Time Series. \textit{DP University of St. Gallen}. \\

Baqace, D. (2020). Asymmetric Inflation Expectations, Downward Rigidity of Wages, and Asymmetric Business Cycles. Journal of Monetary Economics, 114, 174-193.\\ 

Barnichon, R., and C., Brownlees. (2019). Impulse Response Estimation by Smooth Local Projections. \textit{Review of Economics and Statistics}, 101, 522-530. \\ 


Bec, F., Nielsen, H., and S., Saidi. (2020). Mixed Causal–Noncausal Autoregressions: Bimodality Issues in Estimation and Unit Root Testing. \textit{Oxford Bulletin of Economics and Statistics}, 82      , 1413-1428. \\ 



Billingsley, P. (1986). Probability and Measure, Wiley, New York. \\

Blanchard, O., and D., Quah. (1989). The Dynamic Effects of Aggregate Demand and Supply Disturbances. \textit{American Economic Review}, 73, 655-673. \\ 

Borkovec, M., and C., Kluppelberg. (2001). The Tail of the Stationary Distribution of an Autoregressive Process with ARCH(1) Errors. \textit{Annals of Applied Probability},  11, 1220-1241. \\ 

Cashin, P., Mohaddes, K., and  M., Raissi. (2017). Fair Weather or Foul? The Macroeconomic Effects of El Niño. \textit{Journal of International Economics}, 106, 37-54. \\ 

Chang, P. and S., Sakata. (2007). Estimation of Impulse Response Functions Using Long Autoregressions. \textit{The Econometrics Journal}, 10, 453-463. \\ 

Christiano, L. (2012). Christopher A. Sims and Vector Autoregressions. \textit{The Scandinavian Journal of Economics}, 114, 1082-1104. \\ 

Christoffersen, P., Du, D., and  R., Elkamhi. (2017). Rare Disasters, Credit, and Option Market Puzzles. \textit{Management Science}, 63, 1341-1364.\\

Conn, D., and E., Li (2019). ``An Oracle Property of the Nadaraya-Watson Kernel Estimator for High Dimensional Nonparametric Regression," Scandinavian Journal of Statistics, 46, 735-764. \\

Comon, P. (1994). Independent Component Analysis, a New Concept?. \textit{Signal Processing}, 36, 287-314. \\ 

Cox, J., Ingersoll, J., and S., Ross. (2005). A Theory of the Term Structure of Interest Rates. \textit{Econometrica}, 53, 385-407. \\ 

De Truchis, G., Fries, S., and A., Thomas (2024). Forecasting Extreme Trajectories Using Semi-Norm Representations. \textit{DP Paris Dauphine University}. \\

Diercks, A., Hsu, A., and A., Tamoni. (2024). When it Rains it Pours: Cascading Uncertainty Shocks. \textit{Journal of Political Economy}, 132(2), 694-720.\\

Dufour, J.M., and E., Renault. (1998). Short Run and Long Run Causality in Time Series: Theory. \textit{Econometrica}, 66, 1099-1125. \\ 

Falk, M. (1985). Asymptotic Normality of the Kernel Quantile Estimator. \textit{The Annals of Statistics}, 13, 428-433.\\





Frank, M., and  T., Stengos. (1988). Chaotic Dynamics in Economic Time‐Series. \textit{Journal of Economic Surveys}, 2, 103-133.\\


Frisch, R. (1933). Propagation Problems and Impulse Responses in Dynamic Economies. \textit{Economic Essays in Honour of Gustav Cassel.} Allen and Unwin Ltd.  \\  

Frost, J., and R., van Stralen. (2018). Macroprudential Policy and Income Inequality.\textit{ Journal of International Money and Finance}, 85, 278-290.\\

Geenens, G. (2011). Curse of Dimensionality and Related Issues in Nonparametric Functional Regression. \textit{Statistics Survey}, 5, 30-43. \\

Giacomini, R. (2013). The Relationship Between DSGE and VAR Models. \textit{VAR models in Macroeconomics–New Developments and Applications: Essays in Honor of Christopher A. Sims}, 1-25. \\

Gallant, A., Rossi, P., and  G., Tauchen. (1993). Nonlinear Dynamic Structures. \textit{Econometrica}, 61, 871-907.\\


Goncalves, S., Herrera, A., Kilian, L., and E., Pesavento. (2021). Impulse Response Analysis for Structural Dynamic Models with Nonlinear Regressors. \textit{Journal of Econometrics}, 225, 107-130. \\ 

Gonçalves, S., Herrera, A., Kilian, L., and E., Pesavento. (2024a). State-Dependent Local Projections. \textit{Journal of Econometrics}, 244(2), 105702.\\

Goncalves, S., Herrera, A., Kilian, L., and E., Pesavento. (2024b). Nonparametric Local Projections. Working Paper, Federal Reserve Bank of Dallas. \\


Gospodinov, N. (2004). Asymptotic Confidence Intervals for Impulse Response Functions of Near-Integrated Processes. \textit{Econometrics Journal}, 7, 505-527.\\

Gouri\'eroux, C.,  and J., Jasiak. (2005). Nonlinear Innovations and Impulse Responses with Application to VaR Sensitivity. \textit{Annales d'Economie et de Statistique}, 78, 1-31. \\ 

Gouri\'eroux, C., and J., Jasiak. (2006). Multivariate Jacobi Process with Application to Smooth Transitions. \textit{Journal of Econometrics}, 131(1-2), 475-505. \\

Gourieroux, C., and J., Jasiak. (2010). Local Likelihood Density Estimation and Value‐at‐Risk.\textit{ Journal of Probability and Statistics}, 2010(1), 754-851.\\

Gouri\'eroux, C., and J., Jasiak. (2017). Noncausal Vector Autoregressive Process: Representation, Identification and Semi-Parametric Estimation, \textit{Journal of Econometrics}, 200, 118-134. \\

Gouri\'eroux, C., and J., Jasiak. (2019). Robust Analysis of the Martingale Hypothesis. \textit{Econometrics and Statistics}, 9, 17-41. \\

Gouri\'eroux, C.,  and J., Jasiak. (2022). Nonlinear Forecasts and Impulse Responses for Causal-Noncausal (S)VAR Models. \textit{arXiv preprint arXiv:2205.09922}. \\ 

Gouri\'eroux, C.,  and J., Jasiak. (2023). Generalized Covariance Estimator. \textit{Journal of Business and Economic Statistics}, 41, 1315-1327. \\ 

Gouri\'eroux, C., Jasiak, J., and A., Monfort (2020). Stationary Bubble Equilibria in Rational Expectation Models. \textit{Journal of Econometrics}, 218, 714-735. \\

Gouri\'eroux, C., Jasiak, J., and R., Sufana. (2009). The Wishart Autoregressive Process of Multivariate Stochastic Volatility. Journal of Econometrics, 150(2), 167-181.\\



Gourieroux, C., and Q., Lee. (2024). Forecast Relative Error Decomposition. arXiv:2406.17708.\\

Gourieroux, C., and Q., Lee. (2025). Identification of Impulse Response Functions for Nonlinear Dynamic Models. arXiv:2506.13531.\\

Gouri\'eroux, C., Monfort, A., Mouabbi, S., and J.P., Renne. (2021). Disastrous Defaults. \textit{Review of Finance}, 25, 1727-1772.\\

Gouri\'eroux, C., Monfort, A.,  and J.P., Renne. (2017). Statistical Inference for Independent Component Analysis: Application to Structural VAR Models. \textit{Journal of Econometrics}, 196, 111-126. \\ 




Hamilton, J. (2003). What is an Oil Shock?. \textit{Journal of Econometrics}, 113, 363-398. \\

Hamilton, J. (2011). Nonlinearities and the Macroeconomic Effects of Oil Prices. \textit{Macroeconomic Dynamics}, 15, 364-378. \\ 

Herbst, E. , and B., Johannsen. (2024). Bias in Local Projections. \textit{Journal of Econometrics}, 240(1), 105655. \\

Heston, S. (1993). A Closed Form Solution for Options with Stochastic Volatility with Applications to Bond and Currency Options. \textit{Review of Financial Studies}, 6, 327-343. \\

Hull, S., and A., White (1987). The Pricing of Options on Assets with Stochastic Volatilities. \textit{Journal of Finance}, 42, 281-300.\\

Hyvarinen, A., and P., Pajonen (1999). Nonlinear Independent Component Analysis: Existence and Uniqueness Results. \textit{Neural Networks}, 13, 411-430. \\

Hyvarinen, A., Sasaki, A., and R., Turner (2019): Nonlinear ICA Using Auxilary Variables and Generalized Contrastive Learning. \textit{Proceedings of the 22nd International Conference on Artificial Intelligence and Statistics}, 859-868. \\ 

Inoue, A., Jorda, O., and G., Kuersteiner. (2025). Inference for Local Projections. \textit{The Econometrics Journal}, February.\\


Jordà, Ò. (2005). Estimation and Inference of Impulse Responses by Local Projections. \textit{American Economic Review}, 95, 161-182.\\


Jordà, Ò., Schularick, M.,  and  A., Taylor. (2015). Leveraged Bubbles. \textit{Journal of Monetary Economics}, 76, 1-20. \\ 

Karlin, S., and H., Taylor. (1981). A Second Course in Stochastic Processes. \textit{Elsevier}. \\ 

Karlsen, M., and D., Tjostheim. (2001). Nonparametric Estimation on Null Recurrent Time Series. \textit{Annals of Statistics}, 29, 372-416. \\ 

Kilian, L.,  and Y., Kim. (2011). How Reliable are Local Projection Estimators of Impulse Responses?. \textit{Review of Economics and Statistics}, 93, 1460-1466. \\ 

Kilian, L.,  and H., Lutkepohl. (2017). Structural Vector Autoregressive Analysis. \textit{Cambridge University Press}. \\ 


Kilian, L.,  and R., Vigfusson. (2011). Nonlinearities in the Oil Price–Output Relationship. \textit{Macroeconomic Dynamics}, 15, 337-363.\\

Kilian, L., and R., Vigfusson. (2017). The Role of Oil Price Shocks in Causing US Recessions. \textit{Journal of Money, Credit and Banking}, 49, 1747-1776.\\

Kolesar, M., and M., Plagb\o rg-Moller (2024). Dynamic Causal Effects in a Nonlinear World: The Good, the Bad and the Ugly. DP Princeton University. \\

Koop, G. (1996). Parameter Uncertainty and Impulse Response Analysis. \textit{Journal of Econometrics}, 72, 135-149.\\ 

Koop, G., Pesaran, H.,  and S., Potter. (1996). Impulse Response Analysis in Nonlinear Multivariate Models. \textit{Journal of Econometrics}, 74, 119-147. \\ 

Kuiper, W.,  and A., Lansink. (2013). Asymmetric Price Transmission in Food Supply Chains: Impulse Response Analysis by Local Projections Applied to US Broiler and Pork Prices.\textit{ Agribusiness}, 29, 325-343.\\


Lanne, M., and P., Saikkonen. (2011). Noncausal Autoregressions for Economic Time Series. \textit{Journal of Time Series Econometrics}, 3, 1-39. \\ 



Lee, Q. (2025). Nonlinear Forecast Error Variance Decompositions with Hermite Polynomials. arXiv:2503.11416. \\

Ling, S. (2007). A Double AR(p) Model: Structure and Estimation. \textit{Statistica Sinica}, 17, 161-175. \\ 

Metcalf, G.,  and J., Stock. (2020). Measuring the Macroeconomic Impact of Carbon Taxes. \textit{AEA papers and Proceedings}, 100, 101-06. \\ 


Montiel Olea, J., and M., Plagborg‐Møller. (2021). Local Projection Inference is Simpler and More Robust Than You Think. \textit{Econometrica}, 89, 1789-1823. \\ 

Montiel Olea, J., and M., Plagborg‐Møller. (2022). Corrigendum: Local Projection Inference is Simpler and More Robust Than You Think. Online Manuscript. \\ 


Montiel-Olea, J., Plagborg-Moller, M., Qian, E., and C., Wolf (2024). Double Robustness of Local Projections and Some Unpleasant VARithmetic. DP Princeton University. \\

Paul, P. (2020). A Macroeconomic Model with Occasional Financial Crises.\textit{ Journal of Economic Dynamics and Control}, 112, 103830.\\ 

Pesaran, H.,  and Y., Shin. (1998). Generalized Impulse Response Analysis in Linear Multivariate Models.\textit{ Economics Letters}, 58, 17-29. \\ 

Pesavento, E., and B., Rossi. (2007). Impulse Response Confidence Intervals for Persistent Data: What Have We Learned? \textit{Journal of Economic Dynamics and Control}, 31, 2398-2412. \\ 


Plagborg‐Møller, M., and C., Wolf. (2021). Local Projections and VARs Estimate the Same Impulse Responses. \textit{Econometrica}, 89, 955-980. \\


Ramey, V. (2016). Macroeconomic Shocks and their Propagation. \textit{Handbook of Macroeconomics}, 2, 71-162. \\ 




Rosenblatt, M. (1952). Remarks on a Multivariate Transformation. \textit{The Annals of Mathematical Statistics}, 23, 470-472.\\



Sims, C. (1980). Macroeconomics and Reality. \textit{Econometrica}, 48, 1-48. \\ 


Toda, A. (2020). Susceptible-Infected-Recovered (SIR) Dynamics of Covid-19 and Economic Impact. \textit{arXiv Preprint.} arXiv:2003.11221. \\ 


Tweedie, R. (1975). Sufficient Conditions for Ergodicity and Recurrence of Markov Chains on a General State Space. \textit{Stochastic Processes and their Applications}, 3, 385-403. \\ 


Velasco, C. (2023). Identification and Estimation of Structural VARMA Models Using Higher Order Dynamics. \textit{Journal of Business \& Economic Statistics}, 41, 819-832.\\

Wang, X. (2019). A Long Run Risks Model with Rare Disaster: An Empirical Test in the American Consumption Data. \textit{Advances in Economics, Business and Management Research}, 68, 237-243.\\ 

Weiss, A. (1984). ARMA Models with ARCH Errors. \textit{Journal of Time Series Analysis}, 5, 129-143. \\ 


Wright, J. (2000). Confidence Intervals for Univariate Impulse Responses with Near Unit Root. \textit{Journal of Business and Economic Statistics}, 18, 368-373. \\ 

Xu, K. (2023). Local Projection Based Inference Under General Conditions," DP Indiana University. \\ 


Zhu, H., Zhang, X., Liang, X., and Y., Li. (2017). On a Vector Double Autoregressive Model. \textit{Statistics \& Probability Letters}, 129, 86-95.\\ 




\newpage
	
\appendix

\begin{appendices}

\section{Asymptotic Results}

Since the aim of this paper is more on the definitions and discussions of IRF, we just provide below the main steps in the expansions of the nonparametric and semi-parametric estimators of the IRFs. Some regularity conditions needed for such expansions are given in online Appendix D. 

\subsection{Direct Estimation}  

For expository purposes, we first consider the case where $h=1$. The estimated IRF is given by: 
\begin{equation*}
	\begin{split}
		\widehat{IRF}_T(1,\delta) & = \mathbb{E}[\hat{y}^{(\delta)}_{t+1}-\hat{y}_{t+1}|y_t]\\
		& = \mathbb{E}[\hat{g}_T(y_t,\varepsilon_{t+1}+\delta)-\hat{g}_T(y_t,\varepsilon_{t+1})|y_t]\\
		& = \int_{-\infty}^{\infty}[\hat{g}_T(y_t,\varepsilon+\delta)-\hat{g}_T(y_t,\varepsilon)]\phi(\varepsilon)d\varepsilon.\\  
	\end{split}
\end{equation*}
Hence, we have:
\begin{equation*}
	\widehat{IRF}_T(1,\delta) - IRF_T(1,\delta) = \int_{-\infty}^{\infty}[\hat{g}_T(y_t,\varepsilon+\delta)-g(y_t,\varepsilon+\delta)]-[\hat{g}_T(y_t,\varepsilon)-g(y_t,\varepsilon)]\phi(\varepsilon)d\varepsilon,
\end{equation*}
where $y_t = Q(\Phi(\varepsilon)|y_{t-1})$\footnote{Note that this formula is only valid for the one dimensional framework}. As discussed in Section 4.2, the asymptotic variance of this expression will depend on the conditional cdf, $F(\mathfrak{z}|y)$, the transition density, $f(\mathfrak{z}|y)$, and the conditional quantile, $Q(\alpha|y)$. In Online Appendix C, we discuss their respective nonparametric estimators and convergence.\\

To procced, it is useful to expand the term $\hat{g}_T(y_t,\varepsilon)-g(y_t,\varepsilon)$, given by the following Lemma: 
\begin{lemma}
	Let us assume that $\hat{F}_T(\mathfrak{z}|y)$ tends to $F(\mathfrak{z}|y)$ such that $h(T)[\hat{F}_T(\mathfrak{z}|y)-F(\mathfrak{z}|y)] \rightarrow X(\mathfrak{z}|y)$, where $1/h(T)$ is the speed of convergence and $X(\mathfrak{z}|y)$ is the limiting process, indexed by the conditioning value. Then: 
	\begin{equation*}
		\begin{split}
			h(T)\left[\hat{g}_T(y_t,\varepsilon)-g(y_t,\varepsilon)\right] & = h(T)(\hat{Q}_T-Q)(\Phi(\varepsilon)|y_t) \\ 
			& = - \frac{h(T)}{f[Q(\Phi(\varepsilon_t)|y_t)|y_t]} (\hat{F}_T-F)[Q(\Phi(\varepsilon_t)|y_t)|y_t] + o_p(1).\\
		\end{split}
	\end{equation*}
\end{lemma}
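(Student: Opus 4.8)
The plan is to prove the asymptotic linearization of the estimated quantile function by inverting the relationship between the conditional cdf and its inverse, the conditional quantile. The statement is essentially a functional delta-method (Bahadur-type) representation: it linearizes the quantile estimation error in terms of the cdf estimation error. The key observation is that $\hat g_T(y_t,\varepsilon) = \hat Q_T(\Phi(\varepsilon)\,|\,y_t)$ and $g(y_t,\varepsilon)=Q(\Phi(\varepsilon)\,|\,y_t)$, so the left-hand side is exactly the scaled estimation error of the conditional quantile at the probability level $\alpha=\Phi(\varepsilon)$, which explains the first equality by definition.

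For the substantive second equality, I would proceed as follows. Fix the conditioning value $y_t$ and write $\alpha=\Phi(\varepsilon)$. By definition of the quantile, $F[Q(\alpha|y_t)\,|\,y_t]=\alpha$ and $\hat F_T[\hat Q_T(\alpha|y_t)\,|\,y_t]=\alpha$ (up to negligible error from the discreteness of the empirical objective). Subtracting these two identities and expanding $\hat F_T$ around $Q(\alpha|y_t)$ gives, to first order,
\begin{equation*}
0 = \hat F_T[\hat Q_T(\alpha|y_t)|y_t] - F[Q(\alpha|y_t)|y_t] \approx (\hat F_T-F)[Q(\alpha|y_t)|y_t] + f[Q(\alpha|y_t)|y_t]\,(\hat Q_T - Q)(\alpha|y_t).
\end{equation*}
Solving for the quantile error yields
\begin{equation*}
(\hat Q_T - Q)(\alpha|y_t) = -\frac{1}{f[Q(\alpha|y_t)|y_t]}(\hat F_T-F)[Q(\alpha|y_t)|y_t] + (\text{remainder}),
\end{equation*}
which, after multiplying by $h(T)$ and substituting back $\alpha=\Phi(\varepsilon)$ and $Q(\Phi(\varepsilon)|y_t)=g(y_t,\varepsilon)$, is exactly the claimed expansion. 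The positivity and continuity of the transition density $f$ assumed in Proposition 1 guarantee that the denominator is bounded away from zero in a neighborhood of the relevant quantile, so the division is legitimate.

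The main obstacle is controlling the remainder term and showing it is genuinely $o_p(1)$ after scaling by $h(T)$. This requires two ingredients: first, a consistency argument that $\hat Q_T(\alpha|y_t)\to Q(\alpha|y_t)$ so that the Taylor expansion of $\hat F_T$ is taken around a point close to the true quantile; and second, a stochastic equicontinuity (or tightness) argument for the limiting process $X(\cdot|y)$ so that $(\hat F_T-F)$ evaluated at the random point $\hat Q_T$ can be replaced by its value at the deterministic point $Q$ without affecting the leading order. Concretely, one writes the difference $(\hat F_T-F)[\hat Q_T|y_t]-(\hat F_T-F)[Q|y_t]$ and bounds it using the assumed weak convergence $h(T)(\hat F_T-F)\to X$ together with the smoothness of $X$ in its first argument; the product of this small increment with the already-small quantile error furnishes the quadratic (hence negligible) term. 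I would also note that the local polynomial/check-function formulation of $\hat Q_T$ means the first-order condition holds only approximately, contributing a smoothing bias that must be absorbed into $o_p(1)$ under the bandwidth rate $Tb_T^{5/3}\to\infty$ stated in Proposition 5; the detailed regularity conditions are deferred to the online appendix, so here I would invoke them and present the expansion as the leading-order consequence.
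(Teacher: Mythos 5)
Your proof is correct and follows essentially the same route as the paper's: both obtain the Bahadur-type linearization by Taylor-expanding the (approximate) identity linking $\hat F_T$ and $\hat Q_T$ and replacing the estimated derivative by the true conditional density $f[Q(\Phi(\varepsilon)|y_t)|y_t]$. The only cosmetic difference is the direction of expansion --- the paper writes $y=\hat Q_T[\hat F_T(y)]$ and expands $\hat Q_T$ in its probability argument around $F(y)$, while you expand $\hat F_T$ in its spatial argument around the true quantile --- which gives the identical leading term; your additional remarks on remainder control (consistency of $\hat Q_T$, stochastic equicontinuity, and the approximate first-order condition of the check-function estimator) address points the paper leaves implicit.
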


\textbf{Proof:} See the Online Appendix B for details. \\ 

By Lemma 1, we have that:
\begin{equation*}
	\begin{split}
		& h(T)[\widehat{IRF}_T(1,\delta) - IRF(1,\delta)]\\ 
		=&  \int_{-\infty}^{\infty}\left[\frac{-h(T)}{f[Q(\Phi(\varepsilon_t+\delta)|y_t)|y_t]} (\hat{F}_T-F)[Q(\Phi(\varepsilon+\delta)|y_{t})|y_{t}]\right.\\
		& +\left.\frac{h(T)}{f[Q(\Phi(\varepsilon_t)|y_t)|y_t]} (\hat{F}_T-F)[Q(\Phi(\varepsilon)|y_{t})|y_{t}]  \right]\phi(\varepsilon)d\varepsilon \\ 
	\end{split}
\end{equation*} 
The variance of this expression is given by:

\begin{equation*}
	\begin{split}
		&\mathbb{V}\left\{h(T)[\widehat{IRF}_T(1,\delta) - IRF(1,\delta)]\right\} \\
		=&\mathbb{V}\left\{\int_{-\infty}^{\infty}\left[\frac{-h(T)}{f(Q(\Phi(\varepsilon+\delta)|y_{t}))} (\hat{F}_T-F)[Q(\Phi(\varepsilon+\delta)|y_{t})|y_{t}]\right.\right. \\ 
		& \left.\left. +\frac{h(T)}{f[Q(\Phi(\varepsilon_t)|y_t)|y_t]} (\hat{F}_T-F)[Q(\Phi(\varepsilon)|y_{t})|y_{t}] \right]\phi(\varepsilon)d\varepsilon\right\} \\
	\end{split}
\end{equation*}
\begin{equation*}
	\begin{split}
		= & \int\int \text{Cov}\left\{ \frac{-1}{f[Q(\Phi(\varepsilon+\delta)|y_t)|y_t]} h(T)(\hat{F}_T-F)[Q(\Phi(\varepsilon+\delta)|y_{t})|y_{t}] \right.\\
		& + \frac{1}{f[Q(\Phi(\varepsilon)|y_t)|y_t]} h(T)(\hat{F}_T-F)[Q(\Phi(\varepsilon)|y_{t})|y_{t}], \\
		& + \frac{1}{f[Q(\Phi(\varepsilon'+\delta)|y_t)|y_t]} h(T)(\hat{F}_T-F)[Q(\Phi(\varepsilon'+\delta)|y_{t})|y_{t}] \\
		& + \left. \frac{-1}{f[Q(\Phi(\varepsilon')|y_t)|y_t]} h(T)(\hat{F}_T-F)[Q(\Phi(\varepsilon')|y_{t})|y_{t}]\right\}\phi(\varepsilon)\phi(\varepsilon')d\varepsilon d\varepsilon'\\
	\end{split}
\end{equation*}
By Donsker's Theorem, we know that:
\begin{equation*}
	h(T)\left[\hat{F}_T(\mathfrak{z}|y)-F(\mathfrak{z}|y)\right]\rightarrow BB(\mathfrak{z}|y),
\end{equation*}
where $BB(\mathfrak{z}|y)$ denotes the Brownian Bridge in $\mathfrak{z}$ indexed by the conditioning value, $y$. The standard Brownian Bridge has the covariance operator $w(z,z')=\min[F(z),F(z')]-F(z)F(z')$. Hence, by substituting $z=Q[\Phi(\varepsilon|y_t)|y_t]$, we get: 
\begin{equation*}
	\begin{split}
		&\mathbb{V}\left\{h(T)[\widehat{IRF}_T(1,\delta) - IRF(1,\delta)]\right\} \\
		= & \int\int\left[ -\frac{1}{f(Q(\Phi(\varepsilon+\delta|y_{t}))} \frac{1}{f(Q(\Phi(\varepsilon')|y_{t}))} w\left(Q(\Phi(\varepsilon+\delta)|y_{t}),Q(\Phi(\varepsilon')|y_{t})\right)\right. \\
		& +\frac{1}{f(Q(\Phi(\varepsilon+\delta)|y_{t}))} \frac{1}{f(Q(\Phi(\varepsilon'+\delta)|y_{t}))} w\left(Q(\Phi(\varepsilon+\delta)|y_{t}),Q(\Phi(\varepsilon'+\delta)|y_{t})\right) \\ 
		& +\frac{1}{f(Q(\Phi(\varepsilon)|y_{t}))} \frac{1}{f(Q(\Phi(\varepsilon')|y_{t}))} w\left(Q(\Phi(\varepsilon)|y_{t}),Q(\Phi(\varepsilon')|y_{t})\right) \\ 
		& \left.-\frac{1}{f(Q(\Phi(\varepsilon)|y_{t}))} \frac{1}{f(Q(\Phi(\varepsilon'+\delta)|y_{t}))} w\left(Q(\Phi(\varepsilon)|y_{t}),Q(\Phi(\varepsilon'+\delta)|y_{t})\right)\right]\phi(\varepsilon)\phi(\varepsilon')d\varepsilon d\varepsilon'. \\ 
	\end{split}
\end{equation*}

\subsection{Extension to Any Horizon $h \geq 2$}

The above results can be generalized for any $h \geq 2$ by using a recursive formula written in terms of the $h=1$ case. By a similar argument as in the Proof of Lemma 1, it can be shown that, for $h=2$:
\begin{equation*}
	\begin{split}
	& \left(\hat{g}_T^{(2)} - g^{(2)}\right)(y_t,\varepsilon_{t+1},\varepsilon_{t+2}) \\
\approx &  (\hat{g}_T-g)[g(y_t,\varepsilon_{t+1}),\varepsilon_{t+2}]+\left[\frac{\partial g^{(2)}}{\partial y'}(y_t,\varepsilon_{t+1},\varepsilon_{t+2})\right](\hat{g}_T-g)(y_t,\varepsilon_{t+1}). \\ 
	\end{split}
\end{equation*}
In general, the recursive formula is: 
\begin{equation*}
	\begin{split}
		& \left(\hat{g}_T^{(h)} - g^{(h)}\right)(y_t,\varepsilon_{t+1:t+h}) \\
		\approx &  (\hat{g}_T-g)[g^{(h-1)}(y_t,\varepsilon_{t+1:t+h-1}),\varepsilon_{t+h}]+\left[\frac{\partial g^{(h)}}{\partial y'}(y_t,\varepsilon_{t+h-1},\varepsilon_{t+h})\right](\hat{g}_T^{(h-1)}-g^{(h-1)})(y_t,\varepsilon_{t+1:t+h-1}), \\ 
	\end{split}
\end{equation*}
where $\varepsilon_{t+1:t+h}$ denotes the sequence of innovations $\varepsilon_{t+1},...,\varepsilon_{t+h}$. By repeated substitution, it can be shown that:
\begin{equation*}
	\begin{split}
	& \left(\hat{g}_T^{(h)} - g^{(h)}\right)(y_t,\varepsilon_{t+1:t+h}) \\
	\approx & (\hat{g}_T-g)[g^{(h-1)}(y_t,\varepsilon_{t+1:t+h-1}),\varepsilon_{t+h}] \\ 
	  & +  \sum_{i=3}^{h} \left[\prod_{j=i}^{h} \frac{\partial g^{(j)}}{\partial y'}(y_t,\varepsilon_{t+j-1},\varepsilon_{t+j})\right] (\hat{g}_T-g)[g^{(i-2)}(y_t,\varepsilon_{t+1:t+i-2}),\varepsilon_{t+i-1}] \\
	  & + \left[\prod_{j=2}^{h} \frac{\partial g^{(j)}}{\partial y'}(y_t,\varepsilon_{t+j-1},\varepsilon_{t+j})\right](\hat{g}_T-g)(y_t,\varepsilon_{t+1}). \\  
	\end{split}
\end{equation*}
Using this formula, we can now evaluate the asymptotic properties of $\widehat{IRF}_T(h,\delta) - IRF_T(h,\delta)$ for any $h \geq 2$. Let us demonstrate the case where $h=2$. We have that: 
\begin{equation*}
	\begin{split}
		& \widehat{IRF}_T(2,\delta) - IRF_T(2,\delta) \\
		= & \int\int\left\{[\hat{g}^{(2)}_T(y_t,\varepsilon_{t+1}+\delta,\varepsilon_{t+2})-g^{(2)}(y_t,\varepsilon_{t+1}+\delta,\varepsilon_{t+2})]\right.\\
		& \left.-[\hat{g}^{(2)}_T(y_t,\varepsilon_{t+1:t+2})-g^{(2)}(y_t,\varepsilon_{t+1:t+2})]\right\}\phi(\varepsilon_{t+1})\phi(\varepsilon_{t+2})d\varepsilon_{t+1}d\varepsilon_{t+2}. \\
		& = \int\int  \left\{(\hat{g}_T-g)[g(y_t,\varepsilon_{t+1}+\delta),\varepsilon_{t+2}]+\left[\frac{\partial g^{(2)}}{\partial y'}(y_t,\varepsilon_{t+1}+\delta,\varepsilon_{t+2})\right](\hat{g}_T-g)(y_t,\varepsilon_{t+1}+\delta)\right. \\ 
		& -\left. (\hat{g}_T-g)[g(y_t,\varepsilon_{t+1}),\varepsilon_{t+2}]+\left[\frac{\partial g^{(2)}}{\partial y'}(y_t,\varepsilon_{t+1},\varepsilon_{t+2})\right](\hat{g}_T-g)(y_t,\varepsilon_{t+1})\right\}\phi(\varepsilon_{t+1})\phi(\varepsilon_{t+2})d\varepsilon_{t+1}d\varepsilon_{t+2}. \\ 
	\end{split}
\end{equation*}


\subsection{Indirect Estimation}

If $S$ is large (i.e. $S=\infty$), then we have:
\begin{equation}\label{id_est}
	\widehat{\widehat{IRF}}_T(h,\delta) = \int \left\{\hat{m}_T^{(h-1)}[\hat{g}(y_t,\varepsilon_{t+1}+\delta)]-\hat{m}_T^{(h-1)}[\hat{g}(y_t,\varepsilon_{t+1})]\right\}\phi(\varepsilon_{t+1})d\varepsilon_{t+1}.
\end{equation}
For the case where $h=1$, it is important to remark that the equations \eqref{dd_irf} and \eqref{lp_irf} coincide with $m^{(0)}=Id$. Hence,there is no uncertainty on $\hat{m}_T^{(0)}$ and the estimators $\widehat{IRF}_T(1,\delta)$ and $\widehat{\widehat{IRF}}_T(1,\delta)$ are identical.On the other hand, when $h \geq 2$, we can show that the term in the integral expression of $\widehat{\widehat{IRF}}_T(h,\delta)$  has the asymptotic expansion:
\begin{equation*}
	\begin{split}
	& \hat{m}_T^{(h-1)}[\hat{g}(y_t,\varepsilon_{t+1}+\delta)]-m^{(h-1)}[\hat{g}(y_t,\varepsilon_{t+1})] \\ 
\approx & [\hat{m}_T^{(h-1)}-m^{(h-1)}][g(y_t,\varepsilon_{t+1})] + \frac{\partial m^{(h-1)}[g(y_t,\varepsilon_{t+1})]}{\partial y'}[\hat{g}_T(y_t,\varepsilon_{t+1})-g(y_t,\varepsilon_{t+1})], \\ 
	\end{split}
\end{equation*}
where $g(y_t,\varepsilon_{t+1})=Q[\Phi(\varepsilon_{t+1})|y_t]$ [from $m^{(h-1)}(y)=\mathbb{E}[y_{t+h-1}|y_t=y]$ [from equation \eqref{mh_lp}]. The first term in the expansion above captures the uncertainty of the Nadaraya-Watson estimator of the conditional expectation, and the second term the uncertainty when defining the Markov model. Thus, we have:
\begin{equation*}
	\begin{split}
			& h(T)\left[\widehat{\widehat{IRF}}_T(h,\delta)-IRF(h,\delta)\right] \\
	= & h(T)\int  \left\{[\hat{m}_T^{(h-1)}-m^{(h-1)}][g(y_t,\varepsilon_{t+1}+\delta)] + \frac{\partial m^{(h-1)}[g(y_t,\varepsilon_{t+1}+\delta)]}{\partial y'}[\hat{g}_T(y_t,\varepsilon_{t+1}+\delta)-g(y_t,\varepsilon_{t+1}+\delta)]\right. \\ 
		& \left.- [\hat{m}_T^{(h-1)}-m^{(h-1)}][g(y_t,\varepsilon_{t+1})] - \frac{\partial m^{(h-1)}[g(y_t,\varepsilon_{t+1})]}{\partial y'}[\hat{g}_T(y_t,\varepsilon_{t+1})-g(y_t,\varepsilon_{t+1})] \right\}\phi(\varepsilon_{t+1})d\varepsilon_{t+1}.\\
	\end{split}
\end{equation*}

The expansion above shows the main difficulties when looking for the asymptotic distribution of the nonparametric indirect estimator of the IRF. Within the integral, we have two types of terms that involve either functions $\hat{g}_T(y_t,\varepsilon_{t+1})-g(y_t,\varepsilon_{t+1})$, say, and $[\hat{m}_T^{(h-1)}-m^{(h-1)}][g(y_t,\varepsilon_{t+1})]$, say. From Online Appendix C.1, Lemma 1, we know that the first type of terms are converging at a rate of $h(T)=\sqrt{Tb_T}$ (in the one dimensional case) [see online Appendix C] and that in their expansion the rate is coming from the conditioning value $y_t$, not from the argument $\varepsilon_{t+1}$. Thus, we expect after integration with respect to $\varepsilon$ to still have the same rate $h(T)=\sqrt{Tb_T}$ and the asymptotic normality. \\

Let us now consider the second-type term. The difference $\left[\hat{m}_T^{(h-1)}-m^{(h-1)}\right](y)$ based on the Nadaraya-Watson estimator is still consistent, at a speed of $h(T)=\sqrt{Tb_T}$ and asymptotically normal. However, when $y$ is replaced by $g(y_t,\varepsilon)$ and reintegrated with respect to $\varepsilon$, this introduces a smoothing with respect to the conditioning value. Loosely speaking, the conditional expectation to be estimated is replaced by a kind of unconditional expectation. In other words, after reintegration, we expect these terms to converge at a parametric rate [see Ait-Sahalia (1993) for the careful analysis of the effect of integration on functional estimators and the use of Hadamard derivatives]. To summarize, the terms of this second type are negligible with respect to the terms of the first type in the expansion above. Therefore, we have: 
\begin{equation}
\begin{split}
		& h(T)\left[\widehat{\widehat{IRF}}_T(h,\delta)-IRF(h,\delta)\right] \\ 
		\approx & \int\left\{\frac{\partial m^{(h-1)}[g(y_t,\varepsilon+\delta)]}{\partial y'}h(T)[\hat{g}_T(y_t,\varepsilon+\delta)-g(y_t,\varepsilon+\delta)]\right.  \\
		& \left. -\frac{\partial m^{(h-1)}[g(y_t,\varepsilon)]}{\partial y'}h(T)[\hat{g}_T(y_t,\varepsilon)-g(y_t,\varepsilon)] \right\}\phi(\varepsilon)d\varepsilon. \\
\end{split}
\end{equation}

\subsection{Comparison of Accuracies of the Direct and Indirect Approaches}

We can now compare the expansions obtained from the estimated IRF's in the direct and indirect nonparametric approaches. Their expansions differ by the multiplicative terms that are:
\begin{equation*}
	\prod_{j=2}^{h}\frac{\partial g^{(h)}}{\partial y'}(y_t,\varepsilon_{t+j-1},\varepsilon_{t+j}), \ \text{reintegrated w.r.t.} \ \varepsilon_{t+1},...,\varepsilon_{t+h},
\end{equation*}
for the direct approach, and 
\begin{equation*}
	\frac{\partial m^{(h-1)}}{\partial y'}(y_t,\varepsilon_{t+1}), \ \text{reintegrated w.r.t.} \ \varepsilon_{t+1},
\end{equation*}
for the indirect approach. It is easy to check that the effect of these multiplicative terms after reintegration are the same as a consequence of the formula of iterated expectations.

\subsection{Asymptotic Expansion of the Estimated IRF in the Semi-Parametric Model}

\textbf{(i) The estimation error on the IRF} \\

The true IRF at horizon $h=1$ is given in \eqref{sp_irf_true}:
\begin{equation}
\begin{split}
	&	IRF(1,\delta;\beta,(Q_{i})) \\
	=& \int ... \int\left[ g(y_t,\text{vec}(Q_i\circ\Phi(\varepsilon_{i}+\delta));\beta)-g(y_t,\text{vec}(Q_i\circ\Phi(\varepsilon_{i}));\beta)\right]\phi(\varepsilon_1)...\phi(\varepsilon_n)d\varepsilon_1...d\varepsilon_n\\
\end{split}
\end{equation}
We have: 
\begin{equation}\label{irfexp}
\begin{split}
	&IRF(1,\delta;\hat{\beta},(\hat{Q}_{i}))-	IRF(1,\delta;\beta,(Q_{i}))\\
	= & \int ... \int \left[ g(y_t,\text{vec}(\hat{Q_i}\circ\Phi(\varepsilon_{i}+\delta));\hat{\beta})-g(y_t,\text{vec}(Q_i\circ\Phi(\varepsilon_{i}+\delta));\beta)\right]\\ 
	& - \left[ g(y_t,\text{vec}(\hat{Q_i}\circ\Phi(\varepsilon_{i}));\hat{\beta})-g(y_t,\text{vec}(Q_i\circ\Phi(\varepsilon_{i}));\beta)\right]\phi(\varepsilon_1)...\phi(\varepsilon_n)d\varepsilon_1...d\varepsilon_n\\
\end{split}
\end{equation}

\textbf{(ii) Taylor expansion}\\

Then, if the number of observations is large, the estimators $\hat{\beta}$, $\hat{Q}_i$ converge to their true values $\beta$, $Q_i$, and we can consider a Taylor expansion of the term within the integral. We get:
\begin{equation}\label{te_a}
	\begin{split}
&	g(y_t,\text{vec}(\hat{Q_i}\circ\Phi(\varepsilon_{i}));\hat{\beta})-g(y_t,\text{vec}(Q_i\circ\Phi(\varepsilon_{i}));\beta)\\
\approx & D_u g(y_t,\text{vec}(Q_i\circ\Phi(\varepsilon_{i}));\beta)\left[\text{vec}\left((\hat{Q}_i-Q_i)\circ\Phi(\varepsilon_i)\right)\right] + D_\beta g(y_t,\text{vec}(Q_i\circ\Phi(\varepsilon_{i}));\beta)\left[\hat{\beta}-\beta\right], \\
	\end{split}
\end{equation}
where $D_u g(\cdot)$ is a Jacobian matrix with elements $\left[D_u g_i(\cdot)\right]_{i,j} = \frac{\partial g(\cdot)}{\partial u_j}\vert_{u_j=Q_j\circ\Phi(\varepsilon_j)}$, and $D_\beta g(\cdot)$ is a gradient vector with elements $\left[D_\beta g(\cdot)\right]_i=\frac{\partial g_i(\cdot)}{\partial \beta}$. Taking into account these expansions, we deduce from \eqref{irfexp} and \eqref{te_a}: 
\begin{equation}\label{irfex}
\begin{split}
& IRF(1,\delta;\hat{\beta},(\hat{Q}_{i}))-	IRF(1,\delta;\beta,(Q_{i}))\\
=& \int...\int \left\{D_u g(y_t,\text{vec}(Q_i\circ\Phi(\varepsilon_{i}+\delta));\beta)\left[\text{vec}\left((\hat{Q}_i-Q_i)\circ\Phi(\varepsilon_i+\delta)\right)\right]\right.\\
& -\left.D_u g(y_t,\text{vec}(Q_i\circ\Phi(\varepsilon_{i}));\beta)\left[\text{vec}\left((\hat{Q}_i-Q_i)\circ\Phi(\varepsilon_i)\right)\right]\right.\\
& + \left.\left[D_\beta g(y_t,\text{vec}(Q_i\circ\Phi(\varepsilon_{i}+\delta));\beta)- D_\beta g(y_t,\text{vec}(Q_i\circ\Phi(\varepsilon_{i}));\beta)\right]
\left[\hat{\beta}-\beta\right] \right\}\phi(\varepsilon_1)...\phi(\varepsilon_n)d\varepsilon_1...d\varepsilon_n\\ 
& + o_p(1),\\
\end{split}
\end{equation}
where $o_p(1)$ is negligible in probability. \\ 

\textbf{(iii) The asymptotic behaviour of the estimators}\\

The expansions above show the effect of estimation errors on $\beta$ and $Q_i$. By assumption \eqref{abeta} we have:
\begin{equation}
	\sqrt{T}(\hat{\beta} - \beta) \xrightarrow[]{d} N(0,V(\beta)),
\end{equation}
or equivalently $\sqrt{T}(\hat{\beta}_T - \beta_0)\xrightarrow[]{d} V(\beta)^{\frac{1}{2}}Z^*$, where $Z^*$ is a standard normal random variable. Moreover, analogous to the proof of Lemma 1 (see Online Appendix B), we have: 
\begin{equation}\label{qia1}
	\sqrt{T}(\hat{Q}_{i}-Q_{i})\circ[\Phi(\varepsilon_i)] \approx = - \frac{1}{f_i(Q_i\circ(\Phi(\varepsilon_i)))}\sqrt{T}(\hat{F}_i-F_i)\circ\left[Q_i(\Phi(\varepsilon_i))\right],
\end{equation}
for $i=1,...,n$. Then, by using Donsker's Theorem:
\begin{equation}
	\sqrt{T}\left[\hat{F}_i-F_i\right](Q_i(\Phi(\varepsilon_i)) \xrightarrow[]{d} BB_i(Q_i(\Phi(\varepsilon_i)), \ i=1,...,n,
\end{equation}
where the $BB_i$, $i=1,...,n$ are independent Brownian bridges (since they are obtained from the independent components of $\varepsilon$), and $\xrightarrow[]{d}$ denotes the convergence in distribution of the processes (indexed by $\varepsilon_i$). Thus, we deduce that:
\begin{equation}\label{qia2}
	\sqrt{T}(\hat{Q}_{i}-Q_{i})\circ[\Phi(\varepsilon_i)]  \xrightarrow[]{d} \frac{1}{f_i(Q_i\circ(\Phi(\varepsilon_i)))}BB_i(Q_i(\Phi(\varepsilon_i)), \ i=1,...,n.
\end{equation}
Note that in these asymptotic results the limiting variable $Z^*$ and processes $BB_i$ are independent of the observations $(y_t)$. Furthermore, the asymptotic results in \eqref{qia1}-\eqref{qia2} do not take into account the fact that the estimated quantiles $\hat{Q}_i$ are computed from the residuals $\hat{u}_t$ and not from the error themselves (see the discussion in point (v) below). Finally, these asymptotic behaviours are deduced from first order conditions for $\hat{\beta}-\beta$ and $\hat{Q}_i(\varepsilon_i)-Q_i(\varepsilon_i)$ that involve different scores on which the Central Limit Theorem (CLT) and Functional Central Limit Theorem (FCLT) are applied to obtain ``asymptotic normality" [see Online Appendix D]. When they are considered jointly, we get a joint normality in which the variable $Z^*$ and the Brownian Bridges can be correlated. \\

\textbf{(iv) Asymptotic distribution of the estimated IRF} \\ 

Next, we introduce the asymptotics for  $\sqrt{T}(IRF(1,\delta;\hat{\beta},(\hat{Q}_{i}))-	IRF(1,\delta;\beta,(Q_{i})))$. From \eqref{irfex} we get:
\begin{equation}\label{finalirf}
\begin{split}
	& \sqrt{T}\left[IRF(1,\delta;\hat{\beta},(\hat{Q}_{i}))-	IRF(1,\delta;\beta,(Q_{i}))\right]\\
	= & \int...\int \left\{D_u g(y_t,\text{vec}(Q_i\circ\Phi(\varepsilon_{i}+\delta));\beta)\left[\text{vec}\left(- \frac{1}{f_i(Q_i\circ(\Phi(\varepsilon_i)))}\sqrt{T}(\hat{F}_i-F_i)\circ\left[Q_i(\Phi(\varepsilon_i+\delta))\right]\right)\right]\right.\\
	& -\left.D_u g(y_t,\text{vec}(Q_i\circ\Phi(\varepsilon_{i}));\beta)\left[\text{vec}\left(- \frac{1}{f_i(Q_i\circ(\Phi(\varepsilon_i)))}\sqrt{T}(\hat{F}_i-F_i)\circ\left[Q_i(\Phi(\varepsilon_i))\right]\right)\right]\right.\\
	& + \left.\left[D_\beta g(y_t,\text{vec}(Q_i\circ\Phi(\varepsilon_{i}+\delta));\beta)- D_\beta g(y_t,\text{vec}(Q_i\circ\Phi(\varepsilon_{i}));\beta)\right]
	\left[\hat{\beta}-\beta\right] \right\}\phi(\varepsilon_1)...\phi(\varepsilon_n)d\varepsilon_1...d\varepsilon_n + o_p(1)\\ 
\end{split}
\end{equation}

In particular, we have: 
\begin{equation}
\begin{split}
	&  \sqrt{T}\left[IRF(1,\delta;\hat{\beta},(\hat{Q}_{i}))-	IRF(1,\delta;\beta,(Q_{i}))\right]\\
\stackrel{d}{\approx} & \int...\int \left\{D_u g(y_t,\text{vec}(Q_i\circ\Phi(\varepsilon_{i}+\delta));\beta)\left[\text{vec}\left(- \frac{1}{f_i(Q_i\circ(\Phi(\varepsilon_i)))}BB_i\left[Q_i(\Phi(\varepsilon_i+\delta))\right]\right)\right]\right.\\
& -\left.D_u g(y_t,\text{vec}(Q_i\circ\Phi(\varepsilon_{i}));\beta)\left[\text{vec}\left(- \frac{1}{f_i(Q_i\circ(\Phi(\varepsilon_i)))}BB_i\left[Q_i(\Phi(\varepsilon_i))\right]\right)\right]\right.\\
& + \left.\left[D_\beta g(y_t,\text{vec}(Q_i\circ\Phi(\varepsilon_{i}+\delta));\beta)- D_\beta g(y_t,\text{vec}(Q_i\circ\Phi(\varepsilon_{i}));\beta)\right]
\left[\hat{\beta}-\beta\right] \right\}\phi(\varepsilon_1)...\phi(\varepsilon_n)d\varepsilon_1...d\varepsilon_n, \\ \\	
\end{split}
\end{equation}
where $\stackrel{d}{\approx}$ denotes the asymptotic equivalence in distribution.\\

 Then we deduce that $\sqrt{T}\left[(IRF(1,\delta;\hat{\beta},(\hat{Q}_{i}))-	IRF(1,\delta;\beta,(Q_{i})))\right]$ is asymptotically normal with zero mean since we get a linear functional of Gaussian processes (including the degenerate $Z^*$). Note that the equivalence (A.12) is currently writtten for given environment $(y_t)$ and given shock of magnitude $\delta$. With stronger regularity conditions, we can consider (A.11) as valid in terms of processes indexed by $y_t$ and $\delta$, that is a functional limit theorem. The asymptotic variance could be completed explicitly along the lines of the univariate case in Appendix A.1. \\

\textbf{(v) The two-step estimation effect}\\

As mentioned above, we have not yet taken into account the fact that the estimator $\hat{Q}_i$ is completed from residuals instead of being computed with errors. If we distinguish these two estimators and denote $\hat{\hat{Q}}_i$, the estimator computed from residuals, then the estimated IRF is obtained with $\hat{\hat{Q}}_i$. However, we will get another expansion for $\hat{\hat{Q}}_i-{Q}_i$. Then the effect of $Z^*$ in expansion \eqref{finalirf} will be modified [see online Appendix D.4.].

\section*{Online Appendix B Proof of Lemma 1}

We have the following expansion for nonparametric estimators $\hat{Q}_T, \hat{F}_T$ of the quantile and cdf: 
\begin{equation*}
	\begin{split}
		& y = \hat{Q}_T[\hat{F}_T(y)] \\ 
		\iff &  \hat{Q}_T[\hat{F}_T(y)]-Q(F(y)) = 0 \\
		\iff &  \{\hat{Q}_T[\hat{F}_T(y)]-\hat{Q}[F(y)]\}+\{\hat{Q}_T[F(y)]-Q(F(y))\}=0 \\
		\iff &   \frac{d\hat{Q}_T}{d\alpha}[F(y)](\hat{F}_T(y)-F(y))+\{\hat{Q}_T[F(y)]-Q(F(y))\}\approx0\\ 
		& \text{ (By a Taylor expansion of $\hat{Q}_T[\hat{F}_T(y)]$)}\\
		\iff & \{\hat{Q}_T[F(y)]-Q(F(y))\} =  - \frac{d\hat{Q}_T}{d\alpha}[F(y)](\hat{F}_T(y)-F(y)) \\ 
		\iff & (\hat{Q}_T-Q)[F(y)] \approxeq \frac{-1}{f(Q(y))}(\hat{F}_T-F)(y).\\ 
	\end{split}
\end{equation*}
Substituting $F(y)=\Phi(\varepsilon_t)$ and conditioning on $y_{t}$: 
\begin{equation*}
	(\hat{Q}_T-Q)[\Phi(\varepsilon)|y_{t}]\approxeq - \frac{1}{f(Q(\Phi(\varepsilon)|y_{t}))} (\hat{F}_T-F)[Q(\Phi(\varepsilon_t)|y_{t})|y_{t}],
\end{equation*}
we have the desired result. \qed \\

\section*{Online Appendix C: Asymptotic Results for Nonparametric Functional Estimators}

We consider several kernel-based functional estimators for our nonparametric approach and discuss some of their asymptotic properties here. The definitions of the different kernel based functional estimators are standard [see Nadaraya (1964) for regression, Loadler (1996), Silverman (2018) for density estimation, Xiang (1996), Yu and Jones (1998) for quantile regression, Falk (1983) for the kernel quantile estimator]. Their asymptotic properties are valid for stochastic processes [Bosq (2012)].  \\

\textbf{(i) For the density of $y_t$:} \\ 

Suppose $y_t$ is one dimensional with true density $f(y)$. We denote its kernel density estimator as: 

\begin{equation}\label{kde}
	\hat{f}(y)= \frac{1}{Tb_T} \sum_{t=1}^TK\left(\frac{y_t-y}{b_T}\right),
\end{equation}

where  $b_T$ denotes the bandwidth and $K$ is the kernel function which is non-negative and satisfies two properties: [1] The integral of $K(\cdot)$ over its support is equal to unity, that is, $\int K(u)du = 1$. [2] $K$ is a symmetrical function such that $K(-x)  = K(x)$ for all values of $x$ in its support. \\

Under standard regularity conditions (see online Appendix D.3), this estimator is consistent and asymptotically normal:

\begin{equation*}
	\sqrt{Tb_T}\left\{\hat{f}(y)-f(y)\right\} \rightarrow N\left(0,f(y)\int K^2(u)du\right),
\end{equation*}
if $T \rightarrow \infty$, $b_T \rightarrow 0$ and $Tb_T^{5/3} \rightarrow \infty$. \\

\textbf{(ii) For the conditional cdf of $y_t$ given $y_{t-1}$:}\\

This is given by the expression: 

\begin{equation}\label{ccdf}
	\hat{F}(\mathfrak{z}|y) = \sum_{t=2}^T \mathbbm{1}_{y_t<\mathfrak{z}} K\left(\frac{y_t-y}{b_T}\right)/\sum_{t=2}^TK\left(\frac{y_t-y}{b_T}\right).
\end{equation}

Under standard regularity conditions [see online Appendix D.3], this estimator is consistent and asymptotically normal. 
\begin{equation*}
	\sqrt{Tb_T}\left[\hat{F}(\mathfrak{z}|y)-F(\mathfrak{z}|y)\right]\rightarrow N\left[0,\frac{\int K^2(u)duF(\mathfrak{z}|y)(1-F(\mathfrak{z}|y))}{f(y)}\right],
\end{equation*}
if $T \rightarrow \infty$, $b_T \rightarrow 0$ and $Tb_T^{5/3} \rightarrow \infty$. \\

\textbf{(iii) For the conditional quantile of $y_t$ given $y_{t-1}=y$:}\\

The kernel estimator is defined by: 

\begin{equation}\label{cqe}
	\hat{Q}(\alpha|y) = \argmin_{q} \sum^{T}_{t=1}K\left(\frac{y_{t-1}-y}{b_T}\right)\{\alpha(y_t-q)^+ + (1-\alpha)(y_t-q)^-\}.
\end{equation}
 Thus it is obtained as a kernel M. estimator\footnote{See also Falk (1985) for an alternative kernel quantile estimator based on a kernel average of order statistics.}. This approach is largely used in financial applications for the nonparametric estimation of a conditional Value-at-Risk [Gourieroux and Jasiak (2010)]. Under standard regularity conditions [see online Appendix D.3], this estimator is consistent and asymptotically normal:
\begin{equation*}
	\sqrt{Tb_T}\left[\hat{Q}(\alpha|y)-Q(\alpha|y)\right]\rightarrow N\left[0,\frac{\int K^2(u)du \ \alpha \ (1-\alpha)}{f^2(Q(\alpha|y)|y)f(y)}\right],
\end{equation*}
if $T \rightarrow \infty$, $b_T \rightarrow 0$ and $Tb_T^{5/3} \rightarrow \infty$. \\

\textbf{(iv) For the conditional expectation of $y_{t+h}$ given $y_t$: }\\

Consider the Nadaraya-Watson estimator, given by: 
\begin{equation}\label{nwe}
	\begin{split}
		\hat{m}_{h,T}(y) & = \argmin_{m} \sum^{T-h}_{t=1}K\left(\frac{y_t-y}{b_T}\right)(y_{t+h}-m)^2\\
		& = \sum_{t=1}^{T-h}\left[K\left(\frac{y_t-y}{b_T}\right)y_{t+h}\right]/\sum_{t=1}^{T-h}K\left(\frac{y_t-y}{b_T}\right),\\
	\end{split}
\end{equation}
which is used in equation (4.4) for the local projection. Under mild regularity conditions, this estimator is asymptotically normal: 
\begin{equation}
	\sqrt{Tb_T}\left[\hat{m}_{h,T}(y)-m(y)\right] \rightarrow N\left(0,\frac{\sigma^2(y)\int_{-\infty}^{+\infty}K^2(u)du}{f(y)}\right),
\end{equation}
if $T\rightarrow \infty$ and $b_t = O(T^{1/3})$.\\

\section*{Online Appendix D: Regularity Conditions}

We provide in this appendix a list of regularity conditions for the semiparametric estimation of the IRF. They are obtained from Gourieroux and Jasiak (2023) for the first step GCov estimators, Falk (1985), and Ait-Sahalia (1983) for integrals of functional estimators.

\subsection*{D.1 Model}

\begin{enumerate}[label=a.\arabic*, start=1, series=a]
	\item The semiparametric model is given by $y_t=g(y_{t-1},u_t;\beta)$ as in (5.4), where the components of the errors are $u_t = F_{i}^{-1}\circ \Phi(\varepsilon_{i,t})$ and $\varepsilon_{t}$ is $IIN(0,Id)$.
	\item The function $u_t \rightarrow g(y_{t-1},u_t,\beta)$ is invertible and then the model can be written equivalently as $G(y_t,y_{t-1},\beta)=u_t$ as in (5.5).
	\item Functions g and $G$ are known. $\beta$ and $(F_i)$ are parameters to be estimated. 
	\item The distributions $F_i$, $i=1,...,n$, are continuous on $\mathbb{R}$, with strictly positive continuous densities. 
	\item The semiparametric model is well-specified with $\beta_0$, $(F_{i,0})$ as the true values of the parameters. 
	\item There exists, for each $\beta$, $(F_{i})$ in a neighborhood of the true values, a unique strictly stationary solution $(y_t)$ to (5.5) or (5.6). The observations $y_1,...,y_t$ correspond to such a strictly stationary solution. 
	\item $u_t$ is independent of $y_{t-1}$, $y_{t-2}$,...
	\item The process $(y_t)$ admits a nonlinear (infinite) moving average representation with respect to $u_t,u_{t-1},...$
\end{enumerate}

Assumptions a.6 and a.8 are high level assumptions that have to be checked case by case. For instance, in the DAR example of Section 4.3, one can verify that the parameters satisfy the stationarity condition $\mathbb{E}\left[|\phi+\sqrt{\beta}\varepsilon_t\right]<0$ for $\phi=0.5$ and $\beta=0.5$. 

%
\subsection*{D.2 First-Step Estimation}

The parameter $\beta$ can be estimated semiparametrically in various ways. Let us consider a basic GCov estimator based on the sample analogue of the theoretical covariance restrictions: 
\begin{equation}\label{GCOV}
	Cov(a_k\left[g_i(Y_{t},Y_{t-1},\beta)\right],a_\ell\left[g_j(Y_t,Y_{t-1},\beta)\right])=0, \ k,\ell=1,...,K, \ k\neq \ell.
\end{equation}
\begin{enumerate}[label=a.\arabic*, resume=a]
	\item The parameter set $B$ for $\beta$ is compact, with non-empty interior \r{B}.
	\item For the true values of the parameters $\beta$, $(F_i)$, the equation \eqref{GCOV} has a unique solution $\beta_0$ in \r{B} (This is the asymptotic identification condition for the GCov estimator). 
	\item The transformations $a_k\left[g_i(Y_{t},Y_{t-1},\beta)\right]$, are integrable at fourth-order under the true model.
	\item The transformations $a_k\left[g_i(Y_{t},Y_{t-1},\beta),k=1,...,K\right]$ are linearly dependent, for any $i$. 
	\item The functions $a_k$ $k=1,...,K$, are twice continuously differentiable. 
	\item The functions $\beta \rightarrow g_i(Y_t,Y_{t-1},\beta)$  are twice continuously differentiable on \r{B}. 
	\item $\Gamma = \left[\mathbb{V}\left(a_k\left[g_i(Y_{t},Y_{t-1},\beta)\right]\right)\right]^{-1}$ is continuous in $\beta \in$ \r{B} (note that it exists by a.11).
	\item Additional technical conditions in order to apply the Lindeberg-Feller theorem for asymptotic normality. 	
\end{enumerate}
Under such conditions, we can prove the existence, consistency and asymptotic normality of the GCov estimator. 

\subsection*{D.3 Functional Estimators}

Now we need to include the conditions for functional estimation of the $F_i$ based on the true errors $u_{i,0} = G(y_t,y_{t-1},\beta_0)$. Note that these true errors $u_{i,0}$ depend on $\beta_0$, which is unknown. Therefore, the associated ``estimators" using these true values $\beta_0$ are infeasible. \\

First, for the kernel function we assume: 
\begin{enumerate}[label=a.\arabic*, resume=a]
\item The kernel function $K$ is such that $\int K(u) du = 1$ and symmetrical, that is, $K(-u) = K(u)$ for all $u$ in the support of $K$. 
\item The kernel function is bounded, that is, $K(u)=0$ for $|u|> C$, $C < \infty$. 
\end{enumerate}
We also impose a stationarity condition:
\begin{enumerate}[label=a.\arabic*, resume=a]
		\item $(Y_t)$ is a $\alpha-$mixing process such that $|\alpha(\ell)| \geq C\ell^{-\beta}$ for $C>0$ and $\beta>2$, where $\alpha(\ell)$ is the $\alpha-$mixing coefficient at lag $\ell$. 
\end{enumerate}
For the kernel density estimator in \eqref{kde}:
\begin{enumerate}[label=a.\arabic*, resume=a]
	\item Let $g_\ell(y,y')$ be the joint density between $Y_t$ and $Y_{t-1}$. It satisfies uniform boundedness, that is $||g_\ell||_\infty < \infty$. 
	\item The true density $f(y)$ has a continuous second-order derivative at the point $y$.
	\item $Tb_T^{5/3} \rightarrow \infty$, where $b_T$ denotes the bandwidth. 
\end{enumerate}
For the conditional c.d.f. estimator in \eqref{ccdf}:
\begin{enumerate}[label=a.\arabic*, resume=a]
	\item The conditional c.d.f. $Y_t$ given $Y_{t-1}$, $G_{Y_t|Y_{t-1}}(y|y')$, is positive and continuous at $y$. 
	\item $G_{Y_t|Y_{t-1}}(y|y')$ is twice differentiable. 
\end{enumerate}
For the conditional quantile estimator in \eqref{cqe}:
\begin{enumerate}[label=a.\arabic*, resume=a]
	\item The conditional density of $Y_t$ given $Y_{t-1}$, $g_{Y_t|Y_{t-1}}(y|y')$, is positive at the given quantile. 
	\item The conditional density $g_{Y_t|Y_{t-1}}(y|y')$ twice differentiable around the neighbourhood of the given quantile.
\end{enumerate}
For the Nadaraya-Watson estimator in \eqref{nwe}:
\begin{enumerate}[label=a.\arabic*, resume=a]
	\item The function $m$ is continuous and differentiable at $y$. 
	\item The marginal density of $(Y_t)$ is continuous. 
	\item The joint density between $Y_t$ and $Y_{t-1}$ is bounded such that $g_h(y,y')< A_1 < \infty$. 
		\item The conditional density between $Y_t$ and $Y_{t-1}$ is bounded such that $g_h(y|y')< A_2 < \infty$. 
	\item The conditional variance $\mathbb{V}\left[Y_{t+h}|Y_t=y\right]$ is continuous at $y$. 
\end{enumerate}
Finally, tail conditions are also needed to allow for a functional limit theorem for a kernel estimator of $Q$ and the integrations appearing in the last step of the estimation for the IRF. Such a condition can be [Viallon (2007)]:
\begin{equation*}
	\sup_{u \in (-\infty,+\infty)} 
		\left\{ 
		F(u)\,[1-F(u)] \, 
		\left| \frac{df(u)}{du} \right| 
		\,\bigg/\, f^2(u) 
		\right\} 
		\leq \gamma, 
		\quad\text{with } \gamma > 0,
\end{equation*}
or equivalently,
\begin{equation*}
	\sup_{t \in (0,1)} 
\left\{ 
t(1-t) \, 
\left| \frac{d f\!\left[ Q(t) \right]}{du} \right| 
\,\bigg/\, f^2(Q(t)) 
\right\} 
\leq \gamma, 
\quad\text{with } \gamma > 0,
\end{equation*}
written for any distribution $F_i$,$f_i$, $Q_i$ in the case of independent sources. 

\subsection*{D.4 Residuals and Feasible Functional Estimation}

The feasible nonparametric estimators $\hat{Q}_i$ of $Q_i$ are computed from the residuals of the first-step estimation, not from the true errors themselves. Thus, additional assumptions are required to manage the influence of these first step estimation errors on the nonparametric estimator of $F_{i,0}$ and $Q_{i,0}$ to ensure consistency and to derive the appropriate first-order expansion [see equation (A.11) and the discussion in Appendix A.5.(v).]. The main ones are the following:
\begin{enumerate}[label=a.\arabic*, resume=a]
	\item $\lim_{T\rightarrow\infty} \sup_{t=1,...,T} \mathbb{E}\left[\hat{u}_{t}-u_t\right]^2=0$.
\end{enumerate}
Since we have:
\begin{equation*}
	u_t = G(y_{t},y_{t-1},\hat{\beta}_T) - G(y_t,y_{t-1},\beta_0) \approx \frac{\partial G}{\partial \beta'} [y_t,y_{t-1},\beta_0]\left(\hat{\beta}_T - \beta_0\right),
\end{equation*}
we see that:
\begin{equation}
	\sqrt{T}\left(\hat{u}_T-u_t\right) =  \frac{\partial G}{\partial \beta'} [y_t,y_{t-1},\beta_0]\sqrt{T}\left(\hat{\beta}_T - \beta_0\right)\\
	\end{equation}

Therefore, the sample cdf based on residuals is linked to the sample cdf based on true errors. This expansion involves two types of sums: the standard definition of the c.d.f: $F_i(u) \sim \hat{\hat{F_i}}(u)= \frac{1}{T}\sum_{t=1}^T\textbf{1}_{u_{i,t}\leq u}$, and the one appearing in the first-order expansion of the GCov estimator:
\begin{equation*}
	\sqrt{T}(\hat{\beta}-\beta_0) = \frac{1}{\sqrt{T}} \sum_{t=1}^T \mu_{T,t}(\beta_0) + o_p(1),
\end{equation*}
where $\mu_{T,t}(\beta_0)$ is a martingale difference sequence. The next condition ensures that it is possible to manage jointly these two expansions. Among technical conditions for the uniformity of the negligible terms $o_p(1)$, the important one is:
\begin{enumerate}[label=a.\arabic*, resume=a]
	\item The Lindeberg-Feller conditions for joint normality are assumed satisfied fro the multivariate martingale $\left[\textbf{1}_{u_{it} < u_m}- F_t(u_m), \; i = 1,..., n,\; m = 1,..., M,\; \mu_{T,t}(\beta_0)\right]$.
	\item Joint tightness conditions introduced to get a joint FCLT, when the set of values $u_m$ is chosen in an increasing grid such that $\lim_{M\rightarrow\infty} \bigcup_m^M \left[u_1,...,u_m\right]$ is dense in $[0,1]$. 
\end{enumerate}
Then we will derive an asymptotic expansion of $\sqrt{T}(F^*_i-F_{0,i})$ in terms of the Brownian bridge $BB_i$, and of $\sqrt{T}(\hat{\beta}-\beta)$, that will introduce an additional term in $\sqrt{T}(\hat{\beta}-\beta)$ for the expansion (A.11). \\

Modfiying the result in Appendix B we get: 
\begin{equation}
\begin{split}
&\sqrt{T} \, (\hat Q_T - Q)[F(y)]\\
= & -\frac{1}{f(Q(y))}\sqrt{T}\,(\hat{F}_T - F)(y)
+ \frac{f(y)}{f(Q(y))}
\left( \frac{1}{T}\sum_{t=1}^{T} 
\frac{\partial G}{\partial \beta'}[y_t, y_{t-1}; \beta_0] \right)'
\sqrt{T}(\hat\beta_T - \beta_0),\\
\end{split}
\end{equation}
since $\textbf{1}_{\hat{u}_T \leq y}-\textbf{1}_{u_T \leq y}\approx -f(y)(\hat{u}_t-u_T)$. By averaging over $t=1,...,T$, we get $-\frac{1}{T}\sum_{i=t}^Tf(y)\hat{u}_t-u_t$. Then this modified result from appendix B can be applied in a straightforward way to (A.11) to get a similar result. 

\subsection*{D.5 Expansion of the IRF}

When coming to the asymptotic expansion of the IRF, we need additional conditions in order to compute the limit in distribution and the integrals with respect to $\varepsilon_1,...,\varepsilon_n$ with (A.11) and (A.12). Such technical conditions can be found for instance in Ait Sahalia (1983). They demand the introduction of Hadamard derivatives in order to justify the appropriate Taylor expansions with respect to the functions in the Skorokhod space, $\mathcal{D}[0,1]$. We refer the reader to this paper in order to get an idea of these assumptions. \\

\textbf{References:}\\

Adjoudj, L. and A., Tatachak. (2019). Some Properties of a Kernel Conditional Quantile Estimator for Associated Data. \textit{ISTE, Biostatistics and Health}. Issue 1. \\

Bosq, D. (2012). Nonparametric Statistics for Stochastic Processes: Estimation and Prediction. Vol. 110. Springer Science and Business Media. \\

Gourieroux, C., and J., Jasiak. (2010a). Local Likelihood Density Estimation and Value‐at‐Risk. \textit{Journal of Probability and Statistics}, 2010(1), 754-851. \\

Gourieroux, C., and J., Jasiak. (2010b). Value-at-Risk. \textit{Handbook of Financial Econometrics: Tools and Techniques}. p 553-615. \\

Loadler, C. (1996). Local Likelihood Density Estimation. \textit{The Annals of Statistics}, 24(4), 1602-1618.\\

Nadaraya, E. (1964). On Estimating Regression. \textit{Theory of Probability and Its Applications}, 10, 186-190.\\

Silverman, B. (2018). Density Estimation for Statistics and Data Analysis. Routledge.\\

Viallon, B. (2007). \textit{Functional Limit Law for the Increments of the Quantile Process with Applications}. Electronic Journal of Statistics, 1, 496-518.\\

Xiang, X. (1996). A Kernel Estimator of a Conditional Quantile. \textit{Journal of Multivariate Analysis}, 59(2), 206-216.\\

Yu, K., and M., Jones. (1998). Local Linear Quantile Regression. \textit{Journal of the American Statistical Association}, 93(441), 228-237.\\

\end{appendices}

\end{document}